\newcommand{\sv}[1]{}%
 \newcommand{\lv}[1]{#1}%
\newcommand{\appendixText}{}
 \newcommand{\toappendix}[1]{#1}%
\newcommand{\Nn}{\mathbb{N}}
\newcommand{\Graph}{\ensuremath{G}}
\newcommand{\VS}{\ensuremath{X}}
\newcommand{\VSx}{\ensuremath{X}}
\newcommand{\MS}{\ensuremath{D}}
\newcommand{\Vertices}{\ensuremath{V}}
\newcommand{\GraphInd}[1]{\ensuremath{\Graph[{#1}]}}
\newcommand{\nb}[1]{\ensuremath{N(#1)}}
\newcommand{\nbd}[2]{\ensuremath{N_{#1}(#2)}}
\newcommand{\cvd}{\ensuremath{\mathop{\mathrm{cvd}}}}
\newcommand{\FPT}{\ensuremath{\mathsf{FPT}}\xspace}
\newcommand{\XP}{\ensuremath{\mathsf{XP}}\xspace}
\newcommand{\W}[1]{\ensuremath{\mathsf{W[#1]}}}
\newcommand{\NP}{{\ensuremath{\mathsf{NP}}}\xspace}
\newcommand{\cliFC}{\ensuremath{\mathsf{CliqueFC}}\xspace}
\newcommand{\MSs}{D_G}
\newcommand{\ctypes}{\ensuremath{\mathsf{CT}}}
\newcommand{\ntypes}{\ensuremath{\mathsf{NT}}}
\newcommand{\spatterns}{\ensuremath{\mathsf{SP}}}
\newcommand{\shapes}{\ensuremath{\mathsf{Shapes}}}
\newcommand{\shapefn}{\ensuremath{\mathsf{Shape}}}
\newcommand{\fc}[1]{\ensuremath{\mathsf{fc}(#1)}}
\newcommand{\fcnobrackets}[1]{\ensuremath{\mathsf{fc}}}
\newcommand{\Cc}{\mathcal{C}}
\newcommand{\Xx}{\mathcal{X}}
\newcommand{\setupto}[1]{\ensuremath{[#1]}}
\newcommand{\cT}{\ensuremath{\mathsf{cT}}\xspace}
\newcommand{\ctype}{\ensuremath{\mathsf{cT}}\xspace}
\newcommand{\nT}{\ensuremath{\mathsf{nT}}\xspace}
\newcommand{\sP}{\ensuremath{\mathsf{sP}}\xspace}
\newcommand{\mtx}{\ensuremath{\mathsf{M}}\xspace}
\newcommand{\shape}{\ensuremath{\mathsf{shp}}\xspace}
\newcommand{\nodd}{\ensuremath{\mathbb{N}_\text{odd}}\xspace}
\newcommand{\FO}{{$\mathsf{FO}$}\xspace}
\newcommand{\MSOt}{{\ensuremath{\mathsf{MSO}_2}}\xspace}
\newcommand{\MSOo}{\ensuremath{\mathsf{MSO}_1}\xspace}
\newcommand{\dbinpacking}{\textsc{Unary $\ell$-Bin Packing}\xspace}
\newcommand{\dtuple}{\textsc{Unary $d$-Tuple}\xspace}
\newcommand{\FairVD}{\textsc{Fair Vertex $\mathsf{FO}$ Deletion}\xspace}
\newcommand{\FairVE}{\textsc{FairVE}\xspace}
\newcommand{\df}{\coloneqq}
\newcommand{\NOT}[1]{\overline{#1}}
\def\phi{\varphi}
\def\tilde{\widetilde}
\newcommand{\TM}[1]{\textcolor{orange}{TM: #1}}
\newcommand{\prob}[3]{
\begin{center}
\renewcommand{\arraystretch}{0.85}%
\begin{tabularx}{\textwidth}{|lX|}
	\hline
	\multicolumn{2}{|l|}{\hspace{0pt}\rule{0pt}{13pt} #1}\\
	{\bf Input:\enspace}&{#2}\\
	{\bf Question:\enspace}&{#3 \rule[-6pt]{0pt}{6pt}} \\
	\hline
\end{tabularx}
\end{center}
}
\title{Fair Vertex Problems Parameterized by Cluster Vertex Deletion}
\author{Tom\'a\v{s} Masa\v{r}\'ik}{Institute of Informatics, Faculty of Mathematics, Informatics and Mechanics, University of Warsaw}{masarik@mimuw.edu.pl}{0000-0001-8524-4036}{Supported by the Polish National Science Centre SONATA-17 grant number
2021/43/D/ST6/03312.}
\author{Jędrzej Olkowski}{Institute of Informatics, Faculty of Mathematics, Informatics and Mechanics, University of Warsaw}{jo417777@students.mimuw.edu.pl}{TODO ORCID}{Supported by the Polish National Science Centre SONATA-17 grant number 2021/43/D/ST6/03312.}
\author{Anna Zych-Pawlewicz}{Institute of Informatics, Faculty of Mathematics, Informatics and Mechanics, University of Warsaw}{anka@mimuw.edu.pl}
{https://orcid.org/0000-0002-5361-8969}{\flag[0.17\textwidth]{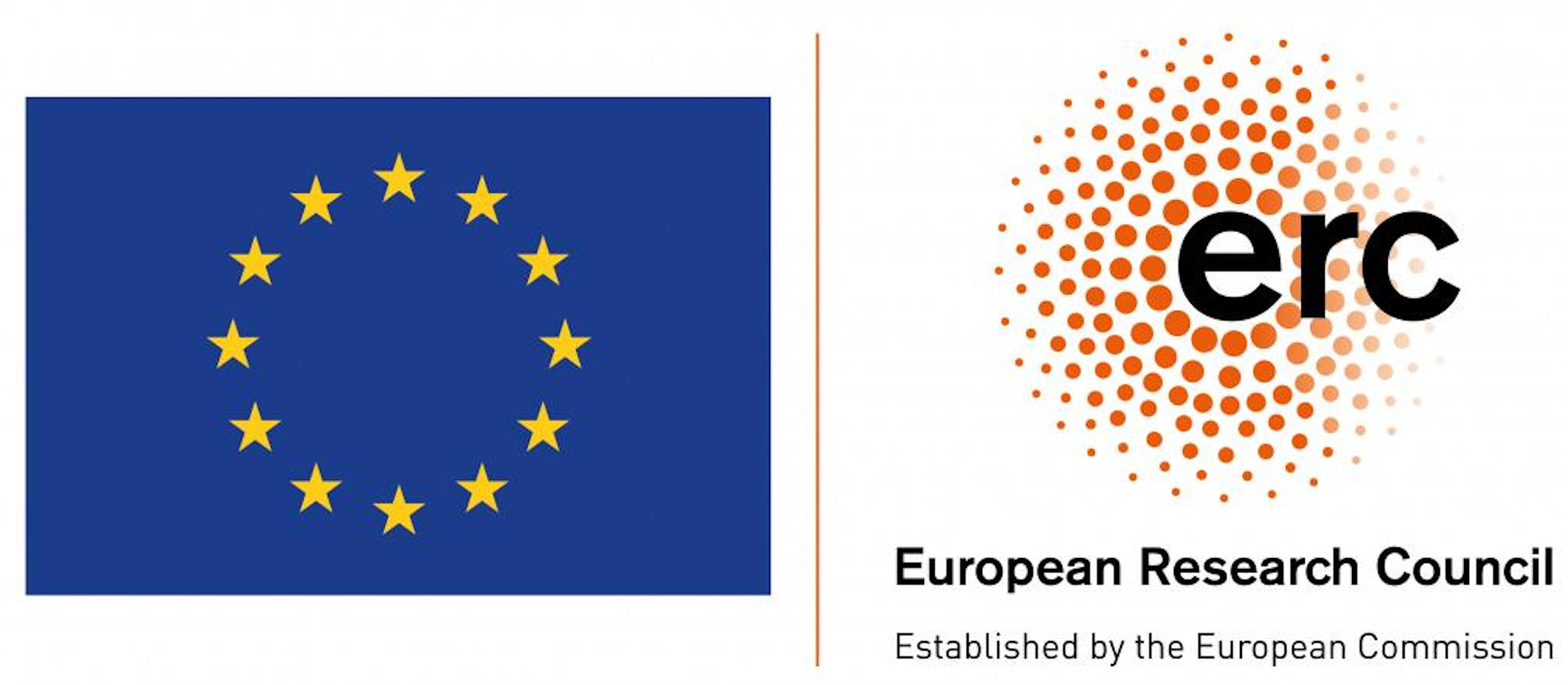}This work is a part of project BOBR that has received funding from the European Research Council (ERC) under the European Union’s Horizon 2020 research and innovation programme (grant agreement No. 948057). }
\authorrunning{T. Masa\v{r}\'ik, J. Olkowski, A. Zych-Pawlewicz}
\keywords{Fair problems, Cluster vertex deletion, metatheorem, parameterized complexity}
\begin{document}

\maketitle

\begin{abstract}
In this paper we study fair variants of \MSOo definable problems parameterized by cluster vertex deletion number, i.e., the smallest number of vertices required to be removed from the graph such that what remains is a collection of cliques.
While typical graph problems seek the smallest set of vertices satisfying some property,
their fair variants seek such a set that does not contain too many vertices in any neighborhood of any vertex.
Formally, the task is to find a set $X\subseteq V(G)$ satisfying some \MSOo definable property, whose fair cost is at most $k$, i.e., such that for all $v\in V(G)$ it holds that $|X\cap N(v)|\le k$,
where $N(v)$ stands for an open neighborhood of a vertex $v$.
Recently, Knop, Masařík, and Toufar [MFCS 2019] showed that all fair \MSOo definable problems can be solved in \FPT time parameterized by the twin cover of a graph.
They asked whether such a statement can be achieved for a more general parameterization by cluster vertex deletion number.

In this paper, we prove that in full generality this is not possible
by demonstrating \W{1}-hardness.
On the other hand, we give a sufficient condition under which a
fair \MSOo definable problem admits an \FPT algorithm parameterized by
the cluster vertex deletion number.
Our algorithm is general enough to capture the fair variant
of many natural graph problems such as the Fair Feedback Vertex Set problem, the Fair Vertex Cover problem, the Fair Dominating Set problem, the Fair Odd Cycle Transversal problem, as well as connected variants thereof.
Moreover, we solve %
the Fair $[\sigma,\rho]$-Domination problem for $\sigma$ finite, or when both $\sigma$ and $\rho$ are cofinite.
That is, given finite or cofinite $\rho,\sigma\subseteq \mathbb{N}$, 
the task is to find set of vertices $X\subseteq V(G)$ of fair cost at most $k$ such that for all $v\in X$, $|N(v)\cap X|\in\sigma$ and for all $v\in V(G)\setminus X$, $|N(v)\cap X|\in\rho$.
\end{abstract}

\newpage

\section{Introduction}\label{sec:intro}

We investigate the parameterized complexity of a rich class of graph problems under structural parameterization.
In particular, we study fair problems that are becoming more popular in recent years, although their history can be traced back to the 1980s~\cite{LinSahni}.
The paradigm of the \emph{fair problems} is that we seek a solution, usually represented as a subset of some universe, which is well-distributed within the given underlying structure but does not necessarily have the smallest possible size.
The standard formulation of the idea above, which is tailored to graph problems, uses the fair cost function as a measure of well-distributedness.
Given a graph $G$, we define the \emph{fair cost} (\fcnobrackets{}) of a set $\VS \subseteq \Vertices(\Graph)$ as $\fc{\VS}\df \max_{v \in \Vertices(\Graph)} |\nb{v} \cap \VS|$, where $\nb{v}$ is the open neighborhood of a vertex $v$.
Then the \emph{fair vertex problems} on graphs ask for a set $\VS\subseteq V(G)$ of the smallest fair cost among all vertex subsets satisfying some property.
As we wish to solve a large collection of problems, we need to formally characterize what kind of properties we work with.
For that, we use the concept of graph logic which is a standard tool that allows us to specify a collection of problems
definable using some particular graph logic.
For example, in the \MSOo logic\footnote{The logic \FO allows quantifications only over elements (vertices or edges).
  A more powerful logic \MSOo extends \FO by permitting quantification over sets of vertices (but not sets of edges).
For more details consult \cite{CE-book}.
}, for which we design our main \FPT algorithm, we can define many standard graph properties such as $k$-colorability, acyclicity, and even the connectivity of a set. 
Following~\cite{KMT19}, we provide two definitions of fair problems, the
second one being more general.

\prob{\sc Fair Vertex $\mathsf{L}$ Deletion ($\mathsf{L}$-FairVD) problems}
{An undirected graph $\Graph$, $k\in\mathbb{N}$, and a sentence $\varphi$ in logic \textsf{L}.}
{Is there a set $\VS \subseteq \Vertices(G)$ of fair cost at most $k$ such that $\GraphInd{\Vertices(G) \setminus \VS} \models \varphi$?}

In the above definition, $\GraphInd{Y}$ denotes a subgraph of
$\Graph$ induced by the set of vertices $Y \subseteq \Vertices(G)$.
This definition does not capture the Fair Dominating Set
problem, i.e., a problem asking for a set $X\subseteq V(G)$
such that each vertex of $G$ either belongs to $X$
or has a neighbor in it.
Hence, we present a more powerful variant where we allow
the formula to use free variables.
Note that even if $\mathsf{L}$ does not inherently have access to additional set variables, the formula can still use the free set variables $X_1,\ldots, X_\ell$.

\prob{\sc $\ell$-Fair Vertex $\mathsf{L}$ Evaluation problems}
{An undirected graph $\Graph$, $k\in\mathbb{N}$, and a formula $\varphi(X_1,\ldots,X_\ell)$ with $\ell$ free variables in logic \textsf{L}.}
{Are there sets $W_1,\ldots,W_\ell$ of vertices each of fair cost at most $k$ such that $\Graph \models \varphi(W_1,\ldots,W_\ell)$?}{}

In the special case of $\ell=1$, we refer
to $\ell$-Fair Vertex $\mathsf{L}$ Evaluation problems
as {\sc $\mathsf{L}$-FairVE} problems.
So far, we defined the fair vertex problems when provided with logic $\mathsf{L}$ and graph $G$, where the aim is to find a set $X\subseteq V(G)$ of the smallest fair cost that satisfies some formula of $\mathsf{L}$.
To speak about particular problems, we often use a template
where we say \emph{fair $\mathcal{P}$ problem} if the
problem $\mathcal{P}$ can be described in some logic $\mathsf{L}$ and, therefore, its fair variant belongs to the {\sc $\mathsf{L}$-FairVE} or even to the {\sc $\mathsf{L}$-FairVD} problems.
An important example
is the Fair Vertex Cover problem, whose deletion variant can be
described in \FO logic by a simple formula stating that
the graph is edgeless, i.e.,
for all vertices $x,y$ we have $xy\not\in E(G)$.
Therefore, the Fair Vertex Cover problem belongs to the {\sc \FO-FairVD} problems.

We can easily modify the fair vertex problem definitions to fair edge problems where we seek $F\subseteq E(G)$ subject to the \emph{edge fair cost} function defined as $\fc{F}\df \max_{v \in \Vertices(\Graph)} \text{deg}_F(v)$, where $\text{deg}_F(v)$ measures the number of edges incident to a vertex $v$ that are in $F$.
The edge variant was the originally studied variant in early works~\cite{Kolman09onfair,KLS10Kam,LinSahni} until the vertex variant was formally defined in 2017~\cite{MT20}. %
\sv{For more details on the edge variant see~\cite{KMT19}.}
\lv{Edge variant is important for the context, but in the current paper we only explore the fair vertex problems and, therefore, we refrain from formal definitions.
For more details on the topic see~\cite{KMT19}. %
}

In this paper, we study the fair vertex problems under structural parameterization.
Very recently, in 2023, Gima and Otachi~\cite{GimaOtachi} showed that {\sc $\ell$-Fair Vertex \MSOt Evaluation} is \FPT parameterized by the vertex integrity and the size of the formula. 
This is complemented by \W{1}-hardness of the Fair Vertex Cover problem even when parameterized by the combined treedepth and feedback vertex set number~\cite{KMT19}.
Since there is very little gap between those parameters, the situation on the sparse side of the spectrum seems to be relatively understood.
On the other hand, we know much less about dense graph parameters.
Knop, Koutecký, Masařík, and Toufar~\cite{KKMT} showed that {\sc $\ell$-Fair Vertex \MSOo Evaluation} is \FPT parameterized by the neighborhood diversity and the size of the formula. 
Note that \MSOo cannot be extended to \MSOt on dense graph parameters because \MSOt model-checking on cliques is not even in \XP unless $\mathsf{E} = \mathsf{NE}$ \cite{CMR,Lampis14}.
Parameterization by the twin cover is explored in \cite{KMT19} and the {\sc \MSOo-FairVE} is \FPT parameterized by the twin cover and the size of the formula while {\sc $\ell$-Fair Vertex \MSOo Evaluation} is already \W{1}-hard under the same parameterization.
The question raised in~\cite{KMT19} was whether the {\sc \MSOo-FairVE} remains \FPT even when parameterized by the cluster vertex
deletion number, which is a stronger parameter. %
The \emph{cluster vertex deletion} number is the size of the smallest set $D\subseteq V(G)$ such that $G-D$ (i.e., $G[V(G)\setminus D]$) is a collection of disjoint cliques.
We denote a set $D\subseteq V(G)$ such that $G-D$ is a collection of disjoint cliques a \emph{modulator} of the graph $G$.
The cluster vertex deletion number is a well-known optimization problem; see e.g., \cite{IKP23} for a very recent overview of results.
In particular, it has already been successfully used as a structural graph parameter to provide \FPT algorithms for various graph problems in the past two years~\cite{BKR23,GKKMV23,KS23}.
In this paper, we resolve the general question negatively, while providing a metatheorem capturing
an extensive collection of problems that do admit an \FPT algorithm under our parameterization.
We refer to \cref{fig:classes} for a visual overview of the results and graph class relations.

\begin{figure}[t]
\centering
\begin{subfigure}{0.32\textwidth}
    \includegraphics[width=\textwidth]{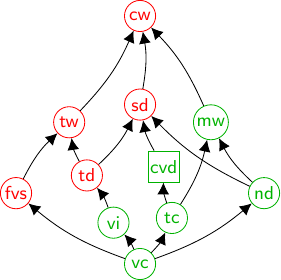}
    \caption{Parameterized complexity of the Fair Vertex Cover problem.}
    \label{fig:classes:VC}
\end{subfigure}
\hfill
\begin{subfigure}{0.32\textwidth}
    \includegraphics[width=\textwidth]{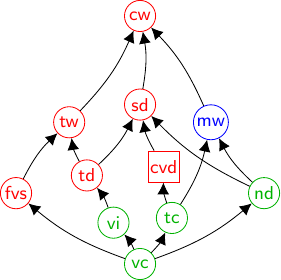}
  \caption{Parameterized complexity of the {\sc \MSOo-FairVE} problem.}
    \label{fig:classes:FairVE}
\end{subfigure}
\hfill
\begin{subfigure}{0.32\textwidth}
    \includegraphics[width=\textwidth]{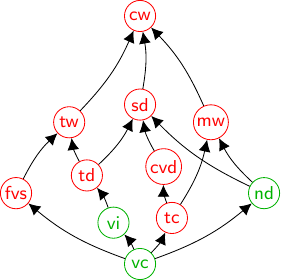}
    \caption{Parameterized complexity of the {\sc $\ell$-Fair Vertex \MSOo Evaluation} problem.}
    \label{fig:classes:FairlVE}
\end{subfigure}
        
\caption{Parameterized complexity of the fair vertex problems. The green color means there is an \FPT algorithm parameterized by the structural parameter (and the size of the formula if it applies). The red color highlights parameters for which there is a \W{1}-hardness result, and the blue color marks parameters that are yet unknown.
Results bounded in rectangles are new in this paper.
An arrow indicates that the parameter at the tip is bounded by a function of the parameter at the tail.
Abbreviations: $\mathrm{vc}$ = vertex cover number, $\mathrm{fvs}$ = feedback vertex set number,
$\mathrm{td}$ = treedepth, $\mathrm{tw}$ = treewidth, $\mathrm{cw}$ = clique-width, $\mathrm{vi}$ = vertex integrity,
$\mathrm{tc}$ = twin cover number, $\mathrm{nd}$ = neighborhood diversity, $\mathrm{mw}$ = modular-width,
$\mathrm{cvd}$ = cluster vertex deletion number, and $\mathrm{sd}$ = shrub-depth.
}
\label{fig:classes}
\end{figure}

\lv{\paragraph*{Related Results.}}%
There was also some interest in exploring fair
problems under parameterizations other than structural.
Parameterization by the solution size was studied in~\cite{JRS,KMMS21}.
Papers~\cite{IKKPS25,JRS,FairHighwayDimension} studied an
extension of the fair concept to sets, specifically %
the fair hitting set problem under various parameterizations.
In particular, \cite{FairHighwayDimension} shows that the Fair Vertex
Cover problem is \NP-complete for any $k\ge 4$ even on planar
graphs, where $k$ is the fair cost of the solution. %

The concept of Fair Problems has been generalized using the notion of \emph{local cardinality constraints}, as introduced by Szeider~\cite{Szeider:11}. 
Given a graph $G$ with cardinality constraints $\alpha(v)\subseteq [n]$ for all $v\in V(G)$,
a solution is a subset $W\subseteq V(G)$ such that $|W\cap N(v)|\in \alpha(v)$ for all $v\in V(G)$. A notable special case, introduced in~\cite{KKMT}, is called \emph{local linear cardinality constraints}, where all $\alpha(v)$ form an interval.
Local linear problems were further studied in~\cite{Knop2021}.
Observe that the fair vertex problems with fair cost $f$ represent the simplest local cardinality constraints, where $\alpha(v)=[0,f]$ for all $v\in V(G)$.

\subsection{Overview of Our Results}\label{sec:overview}

Here, we give an overview of results.
First, we state our hardness theorem that answers a general case of a question posted in~\cite{KMT19}. %
Notice that the hardness is for the more restricted deletion variant where only a sentence in \FO logic is used after the deletion of a set.
\sv{The proof of the following theorem is postponed to \cref{sec:hardness}.\footnote{We mark the statements that are proved in the appendix by $\clubsuit$. %
}}
\lv{\begin{restatable}[Hardness]{theorem}{WhThm}\label{thm:hardness} }
  \sv{\begin{restatable}[Hardness $\clubsuit$]{theorem}{WhThm}\label{thm:hardness}}
    The Fair Vertex \FO Deletion is \W{1}-hard parameterized by the size of the formula and the cluster vertex deletion number, for any cluster vertex deletion number greater than 3.
\end{restatable}

To state the positive result formally, we need to specify
some conditions under which it holds, as by Theorem~\ref{thm:hardness} it cannot hold in full generality.
For that we need to
introduce more model-checking background and tools which we do
in \Cref{sec:mc}.
Here, for the sake of simplicity, we give an overview,
intuition, and a less formal statement.
We also explain the main idea behind the methods of our proof.
We conclude with the list of fair problems which, by the power
of our positive result,
are solvable in \FPT time parameterized by the cluster vertex deletion number.

A big-picture strategy of our proof can be summarized as follows. 
First, we assign a \emph{shape} to every pair $(G,X)$
where $G$ is a graph and $X \subseteq V(G)$ represents
a potential solution to our problem
(See Section~\ref{sec:shape} for more details).
The shapes can be viewed as equivalence classes from the perspective of the formula.
That means we can check the truthfulness of the formula on the shape.
A desired outcome of this part of the proof is that the number of
possible shapes is bounded by the parameters: the formula size and the size of a modulator.
Therefore, all the shapes can be enumerated in \FPT time.
A drawback of this strategy is that the fair cost of the potential solution $X$ cannot be easily mapped to the shapes and has to be computed.
In \Cref{sec:FPT} we show how to compute the actual solution of some
particular shape with
a small fair cost.
More precisely, given a shape we are able to formulate an
integer linear program (ILP) to recover the solution of minimal fair cost, which belongs to the equivalence class of this shape.
Moreover, such an ILP has the number of variables bounded by our parameters,
therefore we utilize \sv{an}\lv{the following} algorithm originally proposed by Lenstra~\cite{Lenstra83}.
\begin{theorem}[{\cite{Lenstra83} with an improved running time by Reis and Rothvoss~\cite{RR23}}]\label{thm:ILPinBoundedDimension}
  There is an algorithm that, given an ILP (Integer Linear Program) with $p$ variables and $m$ constraints, finds an optimal solution in time ${(\log p)}^{\mathcal{O}(p)} \cdot (m\log L)^{\mathcal{O}(1)}$, where $L$ is the maximum absolute value of the coefficients.
\end{theorem}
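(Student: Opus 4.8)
The plan is to prove the theorem by describing Lenstra's recursive branching scheme and then invoking the flatness analysis of Reis and Rothvoss to bound the branching. Throughout, write the instance as $\max\{c^\top x : Ax\le b,\ x\in\mathbb{Z}^p\}$ with $A\in\mathbb{Z}^{m\times p}$ and all coefficients of absolute value at most $L$.

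First I would reduce optimization to feasibility. When the optimum is finite it is attained at a vertex of the relaxation, whose coordinates are ratios of subdeterminants bounded by $p!\,L^{p}$ by Hadamard's inequality, so the objective value lies in a range of size $L^{\mathcal{O}(p)}$ and has bit complexity $\mathcal{O}(p\log L)$. Binary searching on a threshold $\lambda$ and solving the feasibility instance $\{x\in\mathbb{Z}^p : Ax\le b,\ c^\top x\ge\lambda\}$ each time costs an extra $\mathcal{O}(p\log L)$ feasibility calls, each adding a single constraint. Hence it suffices to solve integer feasibility within the claimed bound, absorbing the $\log L$ factor into the $(m\log L)^{\mathcal{O}(1)}$ term.

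Next comes the core feasibility routine, which recurses on the dimension. Given the polytope $P=\{x : Ax\le b\}$, I would first solve the LP relaxation; if $P$ is empty we reject, and otherwise we may assume $P$ is full-dimensional and bounded after polynomial-time preprocessing. Using an approximate L\"owner--John ellipsoid I would compute a linear transformation that makes $P$ well-rounded while carrying $\mathbb{Z}^p$ to a lattice $\Lambda$, and then run lattice basis reduction to obtain a reduced basis of $\Lambda$ and of its dual $\Lambda^{*}$. The heart of the argument is the flatness theorem of Khinchine: there is a function $\omega(p)$ such that every convex body in $\mathbb{R}^p$ either contains a point of $\Lambda$ or has lattice width at most $\omega(p)$ in some dual direction $y\in\Lambda^{*}\setminus\{0\}$. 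Basis reduction lets me either round to a feasible integer point and accept, or exhibit such a thin direction $y$; in the thin case all integer points of $P$ satisfy $y^\top x=t$ for one of at most $\omega(p)+1$ integers $t$, and for each $t$ I parametrize the corresponding hyperplane and recurse on a $(p-1)$-dimensional instance.

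The complexity then follows by analyzing the recursion tree: its depth is $p$, every node performs only LP solving and basis reduction in time $(m\log L)^{\mathcal{O}(1)}$, and the branching factor at each node is the flatness constant $\omega(p)$, so the number of leaves is $\omega(p)^{\mathcal{O}(p)}$. With the classical bounds $\omega(p)=p^{\mathcal{O}(p)}$ this already yields a $p^{\mathcal{O}(p)}$ running time; the improvement to ${\log p}^{\mathcal{O}(p)}$ is exactly the contribution of Reis and Rothvoss. I expect the main obstacle, and the genuinely new ingredient, to be their near-tight \emph{subspace flatness} bound (a resolution up to polylogarithmic factors of the Kannan--Lov\'asz conjecture): rather than branching on single hyperplanes, one locates a low-dimensional sublattice in which $P$ is flat and branches over an affine slicing into comparatively few lower-dimensional faces, so that the effective per-level branching drops from $p^{\mathcal{O}(1)}$ to $(\log p)^{\mathcal{O}(1)}$. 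Establishing this covering bound, and organizing the recursion so that the total count of subproblems multiplies out to ${\log p}^{\mathcal{O}(p)}$ while each subproblem is still solved in polynomial bit-complexity, is where the real work lies.
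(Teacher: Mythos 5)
You should first note that the paper contains no proof of this statement at all: it is imported as a known external result, cited to Lenstra~\cite{Lenstra83} for the algorithm and to Reis and Rothvoss~\cite{RR23} for the improved running time, and it is used purely as a black box inside the paper's ILP-based \FPT argument. So there is no in-paper proof to compare your sketch against; the only meaningful question is whether your outline would stand on its own as a proof of the quoted bound.

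As a self-contained proof it does not, and the gap sits exactly where the theorem's content lies. Your reconstruction of the classical Lenstra--Kannan skeleton is accurate (binary search from optimization to feasibility via subdeterminant bounds on vertex coordinates, well-rounding, basis reduction, flatness-based branching, recursion on dimension), but that skeleton with single-hyperplane branching can only ever give $p^{\mathcal{O}(p)}$; the entire improvement to $(\log p)^{\mathcal{O}(p)}$ is the near-optimal subspace-flatness theorem (the resolution of the Kannan--Lov\'asz conjecture up to polylogarithmic factors) together with the recursion that branches over the $(\log p)^{\mathcal{O}(k)}$ fibers of a projection onto a $k$-dimensional lattice subspace rather than over hyperplanes, and you explicitly defer both of these to Reis--Rothvoss instead of proving them. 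There is also a quantitative slip: you write that the classical flatness bound is $\omega(p)=p^{\mathcal{O}(p)}$ and that this yields $p^{\mathcal{O}(p)}$ leaves, but a per-level branching factor of $p^{\mathcal{O}(p)}$ over a depth-$p$ recursion gives $p^{\mathcal{O}(p^2)}$; to obtain $p^{\mathcal{O}(p)}$ total one needs the polynomial flatness bound $\omega(p)=p^{\mathcal{O}(1)}$. In short, as a citation-level justification your sketch matches how the paper itself treats the theorem, but as a proof it leaves unproven precisely the ingredient that distinguishes the stated running time from the classical one.
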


Hence, the approach above leads to a solution in \FPT time.
This general approach was used in both~\cite{KMT19, MT20} to
obtain similar general results for fair problems
under different parameterization such as twin cover.

Although the general approach to the problem is similar,
parameterization by cluster vertex deletion poses a number of
new challenges. First of all, the vertices in $G - D$ (where $D$
is the modulator) may have different neighborhoods in $D$,
whereas under the parameterization by twin cover all vertices in the same clique of
$G-D$ are twins. Second of all, as mentioned before,
parameterization via cluster vertex deletion is
\W{1}-hard. Hence, we need to apply careful analysis to find
an assumption, as
general as possible, that makes the problem \FPT.

For our \FPT algorithm to work we need a somewhat consistent behavior of a fair cost solution on cliques in $G - D$. An example of such behavior is the necessity to include all but one vertex in the solution from each clique, as in the Fair Vertex Cover problem. Another example is the existence of a low fair cost solution that contains only a bounded (in terms of parameters) number of vertices from each clique.
We now (somewhat informally) explain the general assumption that we
propose. When working with a modulator $D$ it is convenient to
partition all cliques in $G-D$ into \emph{clique types}. We divide cliques into types depending on their neighborhoods in $D$. To be more precise, a clique type is a vector indexed with subsets $S \subseteq D$, storing a number for each $S \subseteq D$. This number describes how many vertices of the clique have neighborhood $S$ in $D$. Since we want to enumerate all possible clique types,
storing the exact number of such vertices is prohibitively
expensive. Thus, the clique type stores the precise number only
up to a certain threshold $\alpha$. Above this threshold we only
know from the clique type that there is more than $\alpha$ vertices
in the clique with neighborhood $S$.
In this way, we can partition all cliques in $G - D$ into a bounded number of clique types.

We consider some solution $X \subseteq V(G)$ to our
problem that we wish to find. Since we can enumerate all
possible candidates for $X \cap D$, we focus on finding $X$
within the cliques of $G-D$. As each clique type can be treated
independently, it is safe to assume for now
that all the cliques in the graph admit the same clique type.
We say that set $S \subseteq D$ is $\alpha$-\emph{bounded} (for
the clique
type under consideration) if the number of vertices with
neighborhood $S$ in $D$ is stored explicitly by the clique
type, i.e., all cliques within this clique type have the same
amount of vertices with neighborhood $S$ in $D$ and this amount
is at most $\alpha$. Moreover, we say that $S$ is
$\delta$-\emph{thin} with regard to the considered clique type and
the considered solution $X$ if for every clique $C$ within
this clique type we have
$\left| \{ v \in V(C) \mid N(v)\cap D=S \} \cap X \right| \leq \delta$, that is, the
size of the intersection of the set of vertices with neighborhood $S$ and $X$ is bounded above by the parameter $\delta$. Finally, we will say that
$S$ is $\delta$-\emph{thick} if for every clique $C$ within this clique type we have
$\left|\{ v \in V(C) \mid N(v)\cap D=S \}  \setminus X\right| \leq \delta$, that is,
the size of the intersection of the set of vertices with neighborhood $S$ and
$X$ is upper-bounded by the parameter $\delta$. We will say that a pair
$(G,X)$ has an $\alpha$-coherent shape (with regard to the considered
clique type) if each set $S \subseteq D$ is either
$\alpha$-bounded, $\alpha/2$-thin,
or $\alpha/2$-thick.
If a pair $(G,X)$ has an $\alpha$-coherent shape with regard to all
non-empty
clique types, then we just say that $(G,X)$ has an $\alpha$-coherent
shape. We illustrate the coherence property in \cref{fig:coherent}. The $\alpha$-coherence for large enough alpha is \lv{precisely} what we need to make our \FPT algorithm work.

\begin{figure}
\centering
\begin{center}
\includegraphics[width=\sv{0.75}\textwidth]{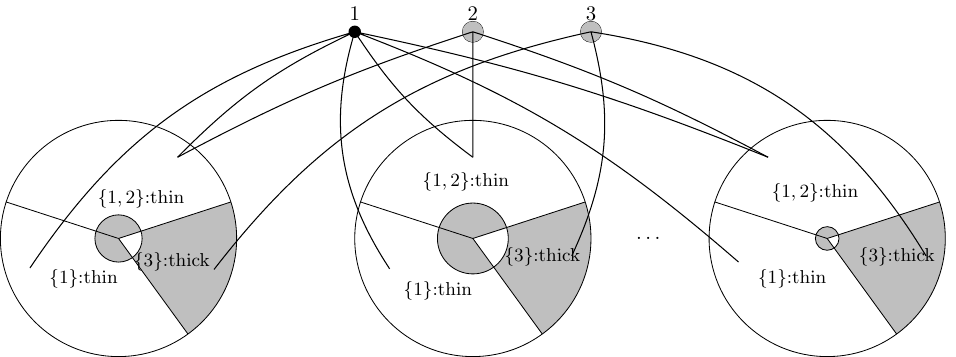}
  \end{center}

\caption{Illustration of coherence for a pair $(G,X)$.
The modulator $D$ is pictured by the three vertices at the top,
where the marked vertices belong to the solution $X$. All
cliques in $G-D$ are of the same clique type. The sets
$\{1\},\{1,2\},\{3\}$ are unbounded in this example. The
cliques are partitioned into vertices with neighborhoods
$\{1\},\{1,2\}$ and $\{3\}$ respectively (there are no
other vertices in the cliques). The solution $X$ is marked gray
within each part of each clique.
The sets $\{1\},\{1,2\}$ are thin, while the set $\{3\}$ is
thick with respect to the clique type and $X$.}\label{fig:coherent}
\end{figure}

In the following theorem, we assume the optimum modulator is given on the input as there is an \FPT algorithm parameterized by the solution size that finds it. Currently, the best bound is $1.7549^{|D_G|}\cdot \text{poly}(|V(G)|)$~\cite{FPTCVD}.

\begin{theorem}[Main informal algorithmic \FPT statement; see \cref{thm:main-fpt} for a formal version]\label{thm:main-informal}
    Let $\phi(\aleph)$ be any \MSOo formula with one free
    variable, let $G$ be a graph and let $D_G$ be a minimum
    modulator
    of $G$.
    If there exists a solution of the minimal fair cost whose
    shape is $\alpha$-coherent for sufficiently large value
    of $\alpha$ dependent purely on $|\phi|$ and $|D_G|$,
    then we can solve the respective \textsc{\MSOo-FairVE}
    problem in \FPT time parameterized by $|\phi|$ and $|D_G|$.
\end{theorem}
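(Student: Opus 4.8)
The plan is to follow the shape-and-ILP strategy outlined above, turning the three stages --- enumerate shapes, filter by the formula, optimize fair cost within a shape --- into concrete steps. First I would invoke the shape machinery of \Cref{sec:shape} (built on the model-checking tools of \Cref{sec:mc}, i.e.\ the neighbourhood/type reductions of Lampis~\cite{Lampis12} and Knop--Masařík--Toufar~\cite{KMT19}) to enumerate all shapes. The key fact to establish here is that the number of shapes is bounded by a computable function of $|\phi|$ and $|D_G|$, so the whole family can be listed in \FPT time. Because a shape is by construction a full equivalence class from the formula's viewpoint, each shape carries a well-defined truth value for $\phi(\aleph)$; I would compute this value (this is exactly the content of \cref{cor:mc}) and discard every shape that does not satisfy $\phi$. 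Since every actual solution $X\subseteq V(G)$ reduces to exactly one shape, the genuine minimum-fair-cost solution is realized inside one of the surviving shapes.

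Second, I would restrict attention to coherent shapes. By the hypothesis of the theorem there is a minimum-fair-cost solution whose shape is coherent, so it suffices to compute, for each coherent shape that satisfies $\phi$, the smallest fair cost attainable by a solution lying in that shape's equivalence class, and then output the minimum of these values. This reduction to coherent shapes is what makes the final optimization tractable.

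Third --- the heart of the argument --- I would set up, for a fixed coherent shape $S$ satisfying $\phi$, an integer linear program whose feasible points correspond to the solutions $X\subseteq V(G)$ realizing $S$ and whose optimum is the minimum fair cost over those solutions. Here $S$ pins down the solution on the modulator $D$ exactly, so the contribution of $D$ to every neighbourhood is a constant. The cliques of $G-D$ are partitioned into the groups of \Cref{sec:shape} (cliques sharing, up to the relevant threshold, the same multiset of modulator-neighbourhoods of their vertices); the number of groups is bounded by the number of neighbourhood types, at most $2^{|D|}$ refined by the formula's granularity, hence by $f(|\phi|,|D_G|)$. Coherence (\cref{def:coherent}) guarantees that within each group either every clique is prescribed by how many of its vertices enter the free variable or every clique is prescribed by how many are left out; in the first regime the number of selected vertices per clique is bounded by the formula, and in the second regime the symmetric bound lets the per-clique configuration be described by a bounded vector as well. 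Consequently the distinct admissible per-clique configurations consistent with $S$ are bounded in number, and I would introduce one integer variable counting, for each group and each admissible configuration, how many actual cliques of $G$ are assigned that configuration. The constraints are: matching the aggregate counts that $S$ prescribes for each group, and, for a fresh variable $k$, the fair-cost inequalities $|N(v)\cap X|\le k$ for every vertex $v$ --- each of which is linear, since for a modulator vertex the clique contributions are linear in the assignment variables (plus the fixed $D$-part) and for a clique vertex the internal contribution is determined by its clique's configuration (plus the fixed $D$-part). Minimizing $k$ yields the least fair cost inside $S$. As the ILP has $g(|\phi|,|D_G|)$ variables, \Cref{thm:ILPinBoundedDimension} solves it in \FPT time, and since there are only \FPT-many shapes the whole procedure is \FPT.

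The main obstacle is precisely the bounded-dimension claim for the ILP. What must be argued with care is that coherence keeps the number of admissible per-clique configurations independent of the clique sizes: in a group of large cliques the individual sizes are unbounded, so a naive encoding that tracked the exact number of selected vertices per clique would have unboundedly many variables. Coherence is what forces a single direction (``use $a$'' versus ``leave out $b$'') across the whole group, collapsing the relevant configurations to a formula-bounded set and letting the worst-case internal fair cost of each group be captured linearly; without it the dimension would grow with the instance, consistent with the \W{1}-hardness of \Cref{thm:hardness}.
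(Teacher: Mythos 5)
Your overall strategy coincides with the paper's: enumerate the (parametrically many) shapes, evaluate $\phi$ on each via the associated bounded-size graph (\cref{def:associatedGraph}, \cref{lem:shape_equivalence_one}), and for each coherent satisfying shape solve an ILP in bounded dimension whose counting variables record how many cliques of each group receive each admissible configuration, concluding with \Cref{thm:ILPinBoundedDimension}. Your diagnosis of why coherence is indispensable is also essentially right, although its main payoff is for the \emph{modulator} vertices: because all positive entries of an unbounded $(\nT,\cT)$ column are fat (or all thin), the sum $\sum_{C}|C_\nT|$ over all cliques of that type enters the modulator vertex's count as a constant, so $\mathrm{FC}(v)$ becomes linear in the counting variables; this is exactly the paper's $thick(v)/thin(v)$ decomposition in \cref{lem:coherent_shapeILP}.

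There is, however, a genuine gap in your ILP step. You assert that the constraint $|N(v)\cap X|\le k$ for a vertex $v$ \emph{inside a clique} is linear because ``the internal contribution is determined by its clique's configuration.'' That is false for fat configurations: if the pattern prescribes leaving out $b$ vertices of type $\nT$, the number selected is $|C_\nT|-b$, which depends on the actual size of $C$, and cliques within one of your groups (same truncated signature) have unboundedly varying sizes. With only aggregate counting variables you cannot express ``every clique that receives pattern $\sP$ has all its vertices' fair cost at most $k$,'' since that depends on \emph{which} cliques receive $\sP$. The paper resolves this by fixing $k$ (it is part of the input) and refining the partition: for each clique $C$ it precomputes the set $\mathbf{S}=\{\sP : \cliFC(C,\sP,\nT^*)\le k\}$ of affordable patterns, groups cliques of each type by $\mathbf{S}$ into the classes $N_{\cT}^{\mathbf{S}}$ (still boundedly many), and then the variables $X_{\cT}^{\mathbf{S},\sP}$ range only over $\sP\in\mathbf{S}$, so the in-clique fair-cost condition is enforced by the partition rather than by a linear inequality. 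This refinement is incompatible with your proposal to treat $k$ as an ILP variable to be minimized, since the sets $\mathbf{S}$ depend on $k$; one instead solves the decision version for the given $k$ (or iterates over $k$). Without this extra idea the ILP as you describe it does not correctly encode the fair-cost condition inside large cliques.
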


The assumptions of the above theorem are
properties of the problem,
and we can verify if a problem satisfies them.
We provide such verification for a variety of problems
in \cref{sec:problems}.
Here, we demonstrate this for
the Fair Vertex Cover problem.
Note that from each clique we take either all or all
but one vertices to the vertex cover.
Therefore, each solution leads to a coherent shape, as
for every clique type every subset of the modulator is $1$-thick.

We exhibit several fundamental graph vertex problems that are covered by \cref{thm:main-informal}.
We provide only very concise formulation here, for more details see \cref{sec:problems}.
Given a graph $G$ and integer $d$ in the input,
we describe the problems by expressing how $G-X$ is constrained.
Note that in addition to that we are looking for $X$ with fair cost at most $d$.
The \emph{Fair Feedback Vertex Set} problem requires $G-X$ to be a forest.
The \emph{Fair Odd Cycle Transversal} problem requires $G-X$ to be a bipartite graph.
Of course, we can impose additional \MSOo expressible constraints on $X$ in all the problems above.
In particular, the connectivity constraint on $X$ leads to variants of the problems above that have been well studied in the non-fair setting; see e.g., a recent study of those problems parameterized by clique-width~\cite{FK23}.
We also define the Fair $[\sigma,\rho]$-Domination problem for $\sigma,\rho$ finite or cofinite.
The original \emph{$[\sigma,\rho]$-Domination} problems have been coined by Telle in~\cite{T94}.
The task is to find a set of vertices $X\subseteq V(G)$ such that for all $v\in X$, $|N(v)\cap X|\in\sigma$ and for all $v\in V(G)\setminus X$, $|N(v)\cap X|\in\rho$.
An important and classical member of the family of problems above is the \emph{Dominating Set} problem, where $\sigma=\mathbb{N}, \rho = \mathbb{N}\setminus \{0\}$.
We conclude with a list of problems that are implied by \cref{thm:main-informal}.
\sv{We postpone the proof of \cref{cor:problmesFPT} to \cref{sec:problems}.%
}

\sv{\begin{restatable}[$\clubsuit$]{corollary}{CorSlvdProblems}\label{cor:problmesFPT}}
\lv{\begin{restatable}{corollary}{CorSlvdProblems}\label{cor:problmesFPT}}
  \sv{
  We can solve the following problems in \FPT time parameterized by the cluster vertex deletion:
Fair Vertex Cover,
Fair Feedback Vertex Set,
Fair Odd Cycle Transversal, %
Fair Dominating Set,
Connected variants of these problems,
and Fair $[\sigma,\rho]$-Domination problem for $\sigma$ finite, or when both $\sigma, \rho$ are cofinite.
}
  \lv{
  We can solve the following problems in \FPT time parameterized by the cluster vertex deletion:
   \begin{multicols}{2} %
\begin{itemize}
 \item
Fair Vertex Cover,
 \item
Fair Feedback Vertex Set,
\item
Fair Odd Cycle Transversal, %
   \columnbreak
 \item
Fair Dominating Set,
 \item
Connected variants of these problems,
 \item
and Fair $[\sigma,\rho]$-Domination problem for $\sigma$ finite, or when both $\sigma, \rho$ are cofinite.
\end{itemize}
\end{multicols}
}
\end{restatable}

\lv{\paragraph*{Organization of the Paper.}}
\sv{\noindent\textbf{Organization of the Paper.~~}}
We start by \lv{introducing some preliminary notation and definitions
in \Cref{sec:prelims}. 
We then move on to }%
introducing the
concept of shapes in \cref{sec:shps}. We describe the model-checking
toolbox and give proofs related to structural understanding of the
problem in \cref{sec:mc}\sv{, mostly postponed to Appendix~\ref{sec:mc-app}}.
We state the main theorem (\Cref{thm:main-fpt}) formally in \Cref{sec:FPT}, where we also
provide the algorithmic side of the proof including an ILP
formulation of the problem.
\lv{\cref{sec:problems}}\sv{Appendix~\ref{sec:problems}} provides problem statements of natural problems that are solved by our main theorem together with proofs that they are covered by it.
Finally, \cref{sec:hardness} contains the proof of our hardness result.
\lv{We conclude with short conclusions (\cref{sec:conc}).}

\sv{\noindent\textbf{Preliminaries.}\label{sec:prelims}}
\lv{\section{Preliminaries}\label{sec:prelims}}
We denote $\setupto{n}\df \{ 0,1,2,\ldots, n \}$ and $\nodd \df \{2n+1 : n \in \mathbb{N} \}$.
For a graph $\Graph$ and a set $\VSx \subseteq \Vertices(G)$, we
denote the complement of $\VSx$ as
$\NOT{\VSx}\df\Vertices(G) \setminus \VSx$, the subgraph induced by
$\VSx$ as $\GraphInd{\VSx}$, while $G-\VSx$ denotes $\GraphInd{\NOT{\VSx}}$.
For a vertex $v \in V(G)$, we denote by $N(v)$ the open
neighborhood of $v$, while $N[v]$ denotes the closed
neighborhood of $v$. For a set $\VSx \subseteq \Vertices(G)$ and a
vertex $v \in \Vertices(G)$, we denote $\nbd{\VSx}{v}\df\nb{v} \cap \VSx$, i.e., the neighbors of $v$ in~$\VSx$.
A \emph{modulator} $\MS \subseteq \Vertices(G)$ of $\Graph$ is a set
such that $G - \MS$ is a collection of
vertex-disjoint cliques.
For every graph $G$, we denote by $\MSs$ any fixed modulator of the
smallest size.
For a graph $\Graph$ and its modulator $\MSs$, let $\Cc_\Graph$
denote the collection of cliques in $G - \MSs$. The \emph{cluster
vertex deletion number} of a graph $G$, also denoted as $\cvd(G)$,
is the size of $\MSs$.
\sv{The \emph{size} of $\phi(\aleph)$ (denoted by $|\phi|$) is defined as the number of vertex and set quantifiers. }

\lv{
The formulas of \MSOo logic are those that can be constructed using vertex variables, denoted usually by $x_i, y_i, \dots$, set variables denoted usually by $X_i, Y_i, \dots$, \emph{label classes} denoted by $L_i$, the predicates $E(x_i,x_j)$, $x_i \in L_j$, and $x_i = x_j$ operating on vertex variables, standard propositional connectives, and the quantifiers $\exists, \forall$ operating on vertex and set variables. The semantics is defined in the usual way, with the predicate $E(x_i, x_j)$ being true if and only if $(x_i,x_j) \in E$ and labels being interpreted as sets of vertices.

We will also consider two more standard logics in graph theory. The logic \FO allows quantifications only over elements (vertices or edges).
A more powerful logic, \MSOt, extends \MSOo by permitting
quantification over sets of edges.
It is well known that \FO $\subsetneq$ \MSOo $\subsetneq$ \MSOt, where connectivity and the Hamiltonian cycle are properties demonstrating strict inclusions, respectively.
For more details consult \cite{CE-book}.
Let $\phi(\aleph)$ be a formula with one free variable. 
The \emph{size} of $\phi(\aleph)$ (denoted by $|\phi|$) is defined as the number of vertex and set quantifiers. 
}

\section{Shapes}\label{sec:shps}

In this section, we work towards compact representations of
solutions to a particular \MSOo-FairVE problem that we
wish to solve. \lv{In \Cref{sec:cltypes},}\sv{First,} we introduce clique types,
which compactly represent the relevant information about
cliques of the input graph.
Then, we define compliance of sets of vertices
(\cref{sec:compliant}), a property that allows us to represent
a solution set in a compact form. To justify this definition,
we show that we can assume compliance without any loss in either
the fair cost or the logical value of the formula.
After that, we give a formal definition of a shape
(\cref{sec:shape}). A shape provides a compact description of the
solution and contains all the information relevant for the
truthfulness of the formula.  %

\lv{\subsection{Clique Types}\label{sec:cltypes}}
\sv{\noindent\textbf{Clique Types.~~}\label{sec:cltypes}}
To define the clique types,
we assume from now on that a set $\MSs$ has a fixed ordering.
We denote all binary vectors of length $d$ by $\ntypes^{d}=\{0,1 \}^d$, and we call them \emph{neighborhood types}.
As we mostly use $\ntypes^{d}$ for $d=|\MSs|$, we write $\ntypes$ when it is clear from the context.
Since the vertices in $\MSs$ are ordered,
there is a natural bijection between all subsets of $\MSs$ and
vectors in $\ntypes^{|\MSs|}$.
For each $C\in\Cc_\Graph$ and for each $\nT\in\ntypes^{|\MSs|}$,
we let $C_\nT\df \{ v \in \Vertices(C) : \nbd{\MSs}{v}=\nT \}$ be
the set of vertices in $C$ whose neighborhood in $\MSs$ is $\nT$.

For $\alpha\in\nodd$\footnote{For technical ease of notation that will be apparent later, we insist on $\alpha$ being odd. At this point, this assumption is not necessary, but we prefer to make it consistent and prepare the reader from the start that $\alpha$ takes only odd values.},
 we define an \emph{$\alpha$-clique type} as vector
 \cT indexed by neighborhood types $\nT \in \ntypes$, with entries taking values from $[\alpha]$.
 We say that a clique $C \in \Cc_\Graph$ is of $\alpha$-clique
 type \cT if $\cT[\nT] = \min(|C_\nT|, \alpha)$ for each
 $\nT \in \ntypes$.
We denote as $\ctypes^{(\alpha, |\MSs|)} \df [\alpha]^{\ntypes^{|\MSs|}}$ the set of all $\alpha$-clique types.
Clearly, $|\ctypes^{(\alpha, |\MSs|)}| = (\alpha+1)^{|\ntypes^{|\MSs|}|} = (\alpha+1)^{2^{|\MSs|}}$.
We will use the shortened notation $\ctypes$ for $\ctypes^{(\alpha, |\MSs|)}$ whenever it is clear from the context.

\subsection{Compliant Sets}\label{sec:compliant}
The following definition captures which sets of vertices can be
compactly represented by our approach.

\begin{definition}[$\alpha$-compliant sets]
Let $\alpha$ be a positive integer and $G$ be a graph.
Recall that $D_G$ is the modulator of $G$ and $\Cc_G$ is the
collection of cliques in $G-D_G$.

We say that $X\subseteq V(G)$ is \emph{$\alpha$-compliant} if
for each $C \in \Cc_G$ and for each $\nT\in\ntypes$
\[\text{either } |X\cap C_\nT|\le \frac{\alpha}{2} \text{ or } |\NOT{X}\cap C_\nT|\le \frac{\alpha}{2}.\]
\end{definition}

From now on we will insist on the solution to our problem to be
$\alpha$-compliant for some parameter $\alpha$. To justify that
it is reasonable to expect $\alpha$-compliance and reveal for
which parameter $\alpha$ it is needed, we provide the lemma below.

\begin{restatable}[Compliant solution of capped fair cost]{lemma}{ComSolution}\label{lem:compliant}
    There is a computable function ${f:\mathbb{N}^2\rightarrow \mathbb{N}}$ such that for any \MSOo formula $\phi(\aleph)$ with one free variable, for any graph $G$, and for any $\alpha \in \nodd, \alpha > f(|D_G|, |\phi|)$, the following is true. For each $X\subseteq V(G)$ such that $G\models \phi(X)$, there exist $X'\subseteq V(G)$ such that:
    \nopagebreak
    \begin{multicols}{3}
        \begin{itemize}
          \item $X'$ is $\alpha$-compliant, 
          \item $G\models \phi(X')$, and
            \item $\fc{X'}\le\fc{X}$.
        \end{itemize}
    \end{multicols}
\end{restatable}

Note that $\alpha$-compliance is a weaker
property than $\alpha$-coherence introduced in
Section~\ref{sec:overview}. In particular,
$\alpha$-compliance is necessary for $\alpha$-coherence.
We postpone the proof as we first need to introduce more model-checking tools before we can prove it.

\subsection{Solution Patterns and Shapes}\label{sec:shape}
Throughout this section, we work with $\alpha$-compliant sets,
for some fixed $\alpha$ that depends only on
the formula and the size of the modulator. %

Moreover, we use
$(\alpha+1)$-clique types in order to
distinguish if for some particular clique $C \in \Cc_G$ and
$\nT \in \ntypes$ the size
of $C_\nT$ is greater than $\alpha$ or not. We say that
a pair $(\nT, \ctype)$, where $\nT\in\ntypes$ and
$\ctype\in\ctypes^{(\alpha+1, |D_G|)}$
is \emph{bounded} if $\ctype[\nT] \leq \alpha$.
Otherwise (i.e., $\ctype[\nT]=\alpha+1$), $(\nT, \ctype)$ is \emph{unbounded}.

Now we describe how an $\alpha$-compliant set $X \subseteq V(G)$
can be represented.
To do so, we introduce the concept of shapes and solution patterns. 
For some given parameters $d$ and $\alpha$, we define a \emph{solution pattern} as
a vector indexed by
neighborhood types $\nT \in \ntypes^{d}$ and with values
from $\setupto{\alpha}$. We denote all \emph{solution patterns} as
$\spatterns^{(\alpha, d)}\df[\alpha]^{\ntypes^d}$.
We use shorthand notation $\spatterns$ for
$\spatterns^{(\alpha, |D_G|)}$ when it is clear from the context.
Intuitively, every solution pattern $\sP \in \spatterns^{(\alpha, |\MSs|)}$ describes
how some potential solution $X$ is distributed over vertices
with a
specific neighborhood type $\nT \in \ntypes^{|\MSs|}$ in a
particular clique $C \in \Cc_G$.
To be more precise, if $\cT$ is the clique type of a clique
$C$,
and $\cT[\nT] \leq \alpha$, we can easily describe the exact
amount of vertices from $X$ in $C_\nT$ by a solution pattern
which can hold all possible values up to $\alpha$. If, on the
other hand, $\cT[\nT]=\alpha+1$, assuming $\alpha$-compliance
of $X$, we can describe $X \cap C_\nT$ by either storing
$\sP[\nT]=|X \cap C_\nT|$ if $|X\cap C_\nT|\leq \alpha/2$, or
by storing $\sP[\nT]=\alpha-|\NOT{X} \cap C_\nT|$ if
$|\NOT{X}\cap C_\nT| \leq \alpha/2$. In other words, the
solution pattern encodes either the cardinality of the intersection of $C_\nT$
and $X$, or it encodes the cardinality of $C_\nT \setminus X$, depending on which
of these two values is smaller than $\alpha/2$. This is
summarized in the following definition.

\begin{definition}\label{def:XmatchessP}
  For a graph $G$, $\alpha \in \nodd$ and a clique $C \in \Cc_G$ whose
  $(\alpha+1)$-clique type is $\cT$, we say that a set $X \subseteq V(G)$
  \emph{matches} a solution pattern
  $\sP \in \spatterns^{(\alpha, |\MSs|)}$ on clique $C$
  if for each $\nT \in \ntypes^{|\MSs|}$ it holds
  that \\ $|X \cap C_\nT| =
        \begin{cases*}
            \sP[\nT] & if $(\cT[\nT] \leq \alpha)$ or $(\cT[\nT] = \alpha+1$ and $\sP[\nT] < \alpha / 2)$ \\
            |C_\nT| - (\alpha - \sP[\nT]) & if $\cT[\nT] = \alpha+1$ and $\sP[\nT] > \alpha/2$
          \end{cases*}$
\end{definition}

The above definition naturally partitions the triplets
$(\nT, \ctype, \sP)$ where $\nT\in\ntypes$,
$\ctype\in\ctypes^{(\alpha+1, |D_G|)}$, and
$\sP\in\spatterns^{(\alpha, |D_G|)}$ into the following types:
  \begin{itemize}
    \item[{\bf bounded:}] $(\nT, \ctype, \sP)$ is \emph{bounded} if
      $(\nT,\ctype)$ is bounded (otherwise $(\nT, \ctype, \sP)$ is \emph{unbounded})
    \item[{\bf thin:}] $(\nT, \ctype, \sP)$ is \emph{thin}
           if it is unbounded and $\sP[\nT]<\alpha/2$,
    \item[{\bf thick:}] $(\nT, \ctype, \sP)$ is \emph{thick}
           if it is unbounded and $\sP[\nT] > \alpha/2$.
  \end{itemize}
  Note that as $\alpha$ is odd $(\nT, \ctype, \sP)$ is either thin, thick, or bounded.

\medskip

Having defined solution patterns and their meaning, we move on to
defining the most important combinatorial object of the paper.

\begin{definition}[Shapes]\label{def:shape}
  Let $\alpha \in \nodd$ and $\gamma, d \in \mathbb{N}$. 
  We define $\shapes^{(\alpha,\gamma, d)}$ as a set of all tuples $(\nT^*,\mtx)$, where:
  \begin{enumerate}
    \item[1)] $\nT^* \in \ntypes^d$, %
    \item[2)] $\mtx$ is a $|\spatterns^{(\alpha, d)}| \times |\ctypes^{(\alpha+1, d)}|$ matrix 
  whose entries are from $\setupto{\gamma}$, and %
\item[3)] \phantomsection\label{c:3} for each $\nT\in\ntypes$, $\cT\in\ctypes^{(\alpha+1, d)}$, and $\sP\in\spatterns^{(\alpha, d)}$ such that $\mtx[\sP, \cT]>0$  and  $\cT[\nT] \leq \alpha$, we have $\sP[\nT]\le\cT[\nT]$.
\end{enumerate}
We refer to an element $\shape\in\shapes^{(\alpha,\gamma,d)}$ as a \emph{shape}.
\end{definition}

\medskip

We note that Condition~\hyperref[c:3]{3)} in \cref{def:shape} is
necessary, as it ensures that whenever the triplet $(\nT,\cT,\sP)$
is bounded,
the solution pattern $\sP$ does not describe a solution containing
more vertices than are present in a corresponding subset $C_\nT$
of a clique $C$ of type $\cT$.

We now provide some intuition behind the shapes. First, we briefly
describe the meaning of the parameters $\alpha, \gamma$. Roughly
speaking, $\alpha$ corresponds to the maximum number of vertices
of the same neighborhood within a single clique that the formula
can distinguish. In other words, once there are more than $\alpha$
such vertices, increasing their number further makes no difference
to the formula. On the other hand, $\gamma$ corresponds to the
maximum number of cliques of the same $(\alpha+1)$-clique type the
formula can distinguish. We will formalize this intuition
in \cref{prop:lam} and \cref{cor:induction}. Thus,
we restrict ourselves to shapes with values bounded by these
parameters.
In other words, given $G$ and $\phi(\aleph)$, the shapes in
$\shapes^{(\alpha, \gamma, |\MSs|)}$ describe all
the information about the solution that is relevant for the formula,
as $\alpha$ and $\gamma$ are computed from $\phi(\aleph)$ and $\MSs$.
Let us now intuitively describe how the shapes are interpreted further.
Each such shape, denoted as \shape, consists of two parts that
are supposed to guess some potential solution $X$.
First, the shape guesses the portion of the potential solution
set $X$ in the
modulator $\MSs$. More precisely, $X \cap \MSs$ forms the first part
of $\shape$, denoted $\nT^*$. The second part is a matrix, that
guesses how the solution $X$ is distributed over the cliques and
their respective neighborhood parts. More precisely,
the entry $\mtx[\sP,\cT]$ represents the minimum of two values:
parameter $\gamma$ and the number of cliques $C$ whose
$(\alpha+1)$-clique
type is $\cT$ and such that $X$ matches $\sP$ on $C$.
Notice that the number of all possible shapes is bounded in terms of $\alpha, \gamma, |\MSs|$, so we can enumerate all of them, and maintain an \FPT complexity.

\begin{definition}[The shape of a graph and a set]
For a graph $G$ and an $\alpha$-compliant set
$X \subseteq V(G)$ we define a shape of $X$ in $G$,
denoted as $\shapefn^{(\alpha,\gamma)}(G,X) \in \shapes^{(\alpha,\gamma, |D_G|)}$,
as follows:
\begin{itemize}
 \item $\nT^*=X \cap D_G$
 \item $M[\sP,\cT]=\min (\gamma, |\{ C \in \Cc_G: C \text{ is
 of type } \cT \text{ and } X \text{ matches } \sP \text{ on } C \}|)$\\ for all
 $\sP \in \spatterns^{(\alpha,|D_G|)}$ and $\cT \in \ctypes^{(\alpha+1,|D_G|)}$
\end{itemize}

\end{definition}

We already described how to assign a shape to a pair $(G,X)$.
It is also possible, given a shape $\shape$, to produce a
canonical pair $(G',X')$ whose shape is $\shape$.

\medskip

\begin{definition}[The graph and the set associated with a shape]\label{def:associatedGraph}
  Given $(\nT^*, \mtx)=\shape \in \shapes^{(\alpha,\gamma, d)}$,
  we define the graph $G'$ and the set $X'\subseteq V(G')$
  \emph{associated} with $\shape$ as follows:
  \begin{itemize}
    \item[1)] $G'$ has $d$ vertices in the modulator, and
      exactly those represented by vector $\nT^*$ are in $X'$;
    \item[2)] For each $\cT \in \ctypes^{(\alpha+1, d)}$, $G'$ contains exactly
  $\sum_{\sP \in \spatterns^{(\alpha, d)}} \mtx[\sP,\cT]$
  cliques of clique type $\cT$, as described by the column indexed by $\cT$
  (each clique contains exactly $\cT[\nT]$ vertices of neighborhood type $\nT$
  for each $\nT\in\ntypes$).
  Moreover, for each $\sP\in\spatterns^{(\alpha, d)}$, the set $X'$ matches the
  solution pattern $\sP$ on exactly $\mtx[\sP,\cT]$ cliques of clique type $\cT$.
  \end{itemize}
\end{definition}

\medskip

We established that a shape can be interpreted as
a pair $(G',X')$, which is a canonical instance
corresponding to this shape. This allows us to evaluate a formula
on $G'$, where $X'$ is substituted for the free variable.
We can view this process as evaluating the formula on the shape
directly. This is formalized in the following definition.

\begin{definition}[Evaluation of a shape]\label{def:shapeEvaluation}
  Let $\phi(\aleph)$ be a fixed formula with one free variable.
  For each $\shape\in\shapes^{(\alpha,\gamma, d)}$, we define whether $\shape$ is \emph{true} or \emph{false} for the formula $\phi(\aleph)$ (denoted as $\models \phi(\shape)$).
  Let $G',X'$ be the graph and the set associated with $\shape$.
  Then $\models\phi(\shape)$ if and only if $G'\models\phi(X')$.
\end{definition}

We proceed with the definition of coherent shapes.
Our final \FPT algorithm presented in Section~\ref{sec:FPT} is
only able to find $\alpha$-compliant solutions whose shape is
coherent.

\begin{definition}[Coherent shape]\label{def:coherent}
  Let $\alpha \in \nodd$ and $\gamma, d \in \mathbb{N}$.
  We say that $(\nT^*,\mtx)=\shape\in\shapes^{(\alpha,\gamma, d)}$ is \emph{coherent} if, for each $\nT\in\ntypes$ and $\cT\in\ctypes^{(\alpha+1, d)}$ such that $(\nT,\cT)$ is unbounded, the elements of the set
  \[
    \{(\nT,\cT,\sP)\mid \sP\in\spatterns^{(\alpha, d)} \text{ and } \mtx[\sP,\cT] > 0 \}
  \]
  are either all thin or all thick. In this case, we say that $(\nT, \cT)$ is \emph{thin} or \emph{thick}, respectively.
\end{definition}

\lv{\section{Main Tools and Model-checking Machinery}\label{sec:mc}}
\sv{\subsection{Main Tools and Model-checking Machinery}\label{sec:mc}}

The purpose of this section is to confirm that the truthfulness
of the formula on an $\alpha$-compliant set is equivalent to
the truthfulness on its shape.  %

\begin{restatable}{lemma}{ShapeEquivalence}\label{lem:shape_equivalence_one}
  There is a computable function ${f:\mathbb{N}^2\rightarrow \mathbb{N}}$ such that, for any \MSOo formula $\phi(\aleph)$ with one free variable, for any graph $G$, and for any $\alpha \in \nodd$ and $\gamma \in \mathbb{N}$ such that $\alpha,\gamma > f(|D_G|, |\phi|)$, the following holds.
  If $X\subseteq V(G)$ is $\alpha$-compliant
  then $\models \phi(\shapefn^{(\alpha,\gamma)}(G,X))$ if and only if $G \models \phi(X)$.
\end{restatable}
\sv{Due to space constraints this section is postponed to Appendix~\ref{sec:mc-app}.}

\lv{

Before we prove \cref{lem:shape_equivalence_one}
in \cref{sec:mc:mainProofs} we need to introduce model-checking tools and notation.
In the remainder of this section, we introduce known logical tools
useful for proving Lemma~\ref{lem:shape_equivalence_one}, which is
crucial for the \FPT algorithm in \Cref{sec:FPT}.
}

\toappendix{
We will consider labelled graphs,
which are more convenient for proving some properties of the formula
on a graph. Recall that a label is simply a subset of vertices.

\begin{definition}
  For a labelled graph $G$, we say that vertices $u$ and $v$ have the
  same labelled type if they have the same labels and they have the same
  closed neighborhoods, i.e., $N[u]=N[v]$.
\end{definition}

We will follow the approach of irrelevant vertices and irrelevant cliques, as it was done in \cite{KMT19} and \cite{Lampis12}.

\begin{definition}\label{def:formula_to_sentene}
    For a given labelled graph $G$, a set $X \subseteq V(G)$, and a
    formula $\phi(\aleph)$, we define a sentence $\phi_X$ in the
    following way.
    We introduce a new label $\ell_X$ and assign $\ell_X$ to all vertices in
    $X$, thus obtaining the labelled graph $G_X$. We create the sentence
    $\phi_X$ from $\phi(\aleph)$ by replacing every occurrence of
    $\aleph$ with $\ell_X$.
\end{definition}
We observe that $\phi_X$ is a valid expression, since $\ell_X$ represents a set of vertices, as $X$ does.

\begin{observation}\label{upd}
  Let $G$ be a labelled graph, $X \subseteq V(G)$, and
  $\phi(\aleph)$ be a formula. Then $G \models \phi(X) \iff G_X \models \phi_X.$
\end{observation}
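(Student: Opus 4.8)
The plan is to prove the equivalence by a straightforward structural induction on $\phi$. The entire content of the statement is that $G$ and $G_X$ are identical as structures except that $G_X$ carries one extra label $l_X$ interpreted as exactly the set $X$, and that evaluating $\phi(X)$ under the free-variable assignment $\aleph\mapsto X$ amounts to the same thing as testing membership in this fresh label. Since $l_X$ is introduced as a \emph{new} label in \cref{def:formula_to_sentene}, it does not occur in $\phi(\aleph)$, so substituting the free variable $\aleph$ by $l_X$ is unambiguous and never clashes with a label class $L_j$ already present in the formula.

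First I would record the precise invariant linking the two structures: $V(G)=V(G_X)$, $E(G)=E(G_X)$, and for every label class $L_j$ occurring in $\phi$ the interpretation of $L_j$ is the same in $G$ and $G_X$; the only difference is that $G_X$ additionally interprets $l_X$ as $X$. In particular the vertex universe and the full power set $2^{V(G)}=2^{V(G_X)}$ over which the quantifiers range coincide.

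To push the induction through the quantifiers I would phrase a strengthened hypothesis that allows arbitrary assignments to the variables that are free in a subformula (these are exactly the variables bound by quantifiers higher up in $\phi$). Concretely, for every subformula $\psi$ of $\phi$ and every assignment $\mu$ to its free vertex and set variables other than $\aleph$, I claim $(G,\mu,\aleph\mapsto X)\models\psi$ iff $(G_X,\mu)\models\psi[\aleph:=l_X]$. The base cases are the atoms: $E(x_i,x_j)$ and $x_i=x_j$ mention neither $\aleph$ nor $l_X$ and are decided by the common vertex and edge data, so both sides agree; a membership atom $x_i\in L_j$ for a label class $L_j\neq l_X$ is left unchanged by the substitution and is evaluated by the shared interpretation of $L_j$; finally the atom $x_i\in\aleph$ is replaced by $x_i\in l_X$, and since $\aleph$ is assigned $X$ on the left while $l_X$ is interpreted as $X$ on the right, the two membership tests coincide. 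This single atom is the only place where the construction of $G_X$ is genuinely used.

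The inductive cases are routine. Boolean connectives follow at once from the induction hypothesis since the substitution distributes over them. For a vertex quantifier $\exists x\,\psi$ (and dually $\forall x\,\psi$) the witness ranges over $V(G)=V(G_X)$, so a witness on one side is a witness on the other by applying the hypothesis to $\mu$ extended by the value of $x$; the set quantifiers $\exists Y\,\psi$ and $\forall Y\,\psi$ are handled identically using $2^{V(G)}=2^{V(G_X)}$. Taking $\psi=\phi$ and noting that $\phi$ has no free variable besides $\aleph$ (so $\mu$ is empty and $\phi[\aleph:=l_X]=\phi_X$) yields exactly $G\models\phi(X)\iff G_X\models\phi_X$. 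I do not expect a genuine obstacle: the only point demanding care is the freshness of $l_X$, which guarantees that the substitution is well defined and that no set quantifier can capture or reinterpret $l_X$.
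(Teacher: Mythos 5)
Your proof is correct: the structural induction with the strengthened hypothesis over assignments, hinging on the single atom $x_i\in\aleph$ versus $x_i\in l_X$ and the freshness of $l_X$, is exactly the reasoning the paper relies on. The paper itself offers no proof of \cref{upd} — it treats the equivalence as immediate from \cref{def:formula_to_sentene} — so your argument is simply the standard formalization of what the authors leave implicit, not a different route.
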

We now state a theorem that will be useful for proving \cref{prop:lam},
which is a crucial tool that allows us to maintain the truthfulness of
the formula under graph trimming operations.

\begin{theorem}[Reformulation of {\cite[Lemma 5]{Lampis12}}]\label{ref:lam}
   Let $G$ be a labelled graph and $\phi$ be an \MSOo sentence with $q_S$ set quantifiers and $q_v$ vertex quantifiers. Let $T \subseteq V(G)$ consist of vertices of the same labelled type. Also let $Q$ be any subset of $T$ such that $|T \setminus Q| \geq 2^{q_S}q_v$. Then $G \models \phi$ if and only if $G - Q \models \phi$.
\end{theorem}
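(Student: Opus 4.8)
The plan is to prove this through the Ehrenfeucht--Fra\"iss\'e game characterization of \MSOo-equivalence. Recall that two labeled graphs satisfy exactly the same \MSOo sentences that use at most $q_S$ set quantifiers and $q_v$ vertex quantifiers whenever the Duplicator has a winning strategy in the corresponding $(q_S,q_v)$-game (see, e.g., \cite{CE-book}). Hence it suffices to exhibit a winning Duplicator strategy in the game played between $G$ and $G' \df G - Q$. In this game Spoiler makes, in any order he likes, up to $q_S$ \emph{set moves} (choosing a vertex subset in one graph, answered by a subset in the other) and up to $q_v$ \emph{point moves} (choosing a vertex, answered by a vertex); Duplicator wins if at the end the chosen points form a partial isomorphism that also respects membership in all the chosen sets.

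The strategy rests on the observation that $G$ and $G'$ differ only inside $T$: the labeled graphs induced on $V(G)\setminus T$ and on $V(G')\setminus(T\setminus Q)$ coincide, and all vertices of $T$ are pairwise interchangeable. Indeed, they carry the same labels and the same neighborhood type, and since $T=C_\nT$ lies inside a single clique in all our applications they are in addition pairwise adjacent true twins; consequently any injection among $T$-vertices that is consistent with the chosen sets is automatically a partial isomorphism. Duplicator therefore plays the identity outside $T$ (copying every point and making his set agree with Spoiler's on $V(G)\setminus T$), and a bookkeeping argument is needed only for the vertices of $T$.

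The core is a counting invariant. After $s$ set moves and $t$ point moves have been played, the set moves partition the still-unchosen vertices of $T$ on each side into at most $2^{s}$ \emph{cells} according to their membership profile in the $s$ chosen sets. Writing $g_G(\sigma),g_{G'}(\sigma)$ for the number of free (unchosen) $T$-vertices of profile $\sigma$ on each side, Duplicator maintains, for every profile $\sigma$,
\[
  \min\bigl(g_G(\sigma),\theta(s,t)\bigr)=\min\bigl(g_{G'}(\sigma),\theta(s,t)\bigr),
  \qquad \theta(s,t)\df 2^{\,q_S-s}\,(q_v-t),
\]
that is, corresponding cells are either both at least the threshold $\theta(s,t)$ or have exactly equal size. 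At the start there is a single cell and $\theta(0,0)=2^{q_S}q_v$, so the hypothesis $|T\setminus Q|\ge 2^{q_S}q_v$ together with $|T|\ge|T\setminus Q|$ establishes the base case. A point move lowers the threshold by $2^{q_S-s}\ge 1$ and consumes at most one free vertex per side in a single cell; a short case check shows the invariant survives, and whenever Spoiler picks a free $T$-vertex the matching cell on the other side is guaranteed (since $\theta(s,t)\ge 1$) to contain a free vertex for Duplicator to answer with.

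The delicate step, and the main obstacle, is the set move: here the threshold halves (from $2\theta'$ to $\theta'\df\theta(s+1,t)$) and a single chosen set can split a cell of $T$ into two arbitrarily unequal sub-cells. When both corresponding cells are already $\ge 2\theta'$, Duplicator splits the other side so that any sub-cell Spoiler made smaller than $\theta'$ is matched exactly while its complement stays $\ge\theta'$; this is always possible because the cell has at least $2\theta'$ free vertices. When the two cells were already exactly equal and below threshold, Duplicator simply copies the split one-to-one. Either way the invariant is restored on both new sub-cells, and this halving is precisely what lets the initial factor $2^{q_S}$ absorb all set moves while the linear term $q_v$ absorbs the point moves. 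At the end of the game the matched points respect labels and adjacency (by the twin property of $T$) and set membership (by equality of profiles), so Duplicator wins and therefore $G\models\phi$ if and only if $G-Q\models\phi$.
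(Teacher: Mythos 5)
Your proof is correct, but it follows a genuinely different route from the paper, because the paper contains no proof of \cref{ref:lam} at all: the statement is invoked as a black box, as a reformulation of Lemma~5 of Lampis~\cite{Lampis12}, whose original version deletes a \emph{single} vertex from a twin class of size greater than $2^{q_S}q_v$; the form with an arbitrary $Q$ then follows by deleting the vertices of $Q$ one at a time, which is legitimate precisely because the hypothesis $|T\setminus Q|\ge 2^{q_S}q_v$ keeps the class above threshold throughout the iteration. You instead give a self-contained Ehrenfeucht--Fra\"iss\'e argument that removes all of $Q$ at once; your invariant $\theta(s,t)=2^{q_S-s}(q_v-t)$, with halving at set moves and exact/capped matching of cells, is the game-side rendering of the same counting that drives Lampis's inductive proof, and your case analysis (base case from $|T\setminus Q|\ge 2^{q_S}q_v$, point moves consuming one vertex against a threshold drop of $2^{q_S-s}\ge 1$, set moves splitting cells against a halved threshold) is sound and complete. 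Two points are worth making explicit. First, your observation that the argument needs the vertices of $T$ to be pairwise interchangeable (twins), not merely to share labels and a neighborhood type, is exactly right: without that, the statement is false (take two large disjoint cliques all of whose vertices have the same, say empty, modulator neighborhood), so the paper's notion of ``label type'' must be, and in its applications with $T=C_\nT$ inside a single clique indeed is, a twin class. Second, your appeal to the $(q_S,q_v)$-budget MSO game, in which Spoiler interleaves set and point moves adaptively, is the standard refinement of the MSO Ehrenfeucht--Fra\"iss\'e game, so no nonstandard tool is hidden there. What the paper's citation buys is brevity; what your argument buys is self-containedness and the set-deletion form directly, with no iteration step.
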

Theorem~\ref{ref:lam} allows us to trim the graph without
changing the logical value of the formula, given that there
are many vertices of the same labelled type. For our
purpose, we want to
be able to trim graphs and their subsets of vertices without
changing the logical value of a formula on the particular
subset. This can be achieved with the trimming set defined below.
\begin{definition}[Trimming set]\label{def:trimmingSet}
Let $\tau \in \Nn$. Let $G$ be a labelled graph,
$X \subseteq V(G)$ be any subset of vertices and
$T \subseteq V(G)$ be a subset of vertices of the same
labelled type such that $|T| \geq \tau$.
We define a \emph{$\tau$-trimming set} $Q$ of $(G,X,T)$ depending
on $|X \cap T|$ as follows.
  \newline \textbf{Case 1:}  $|X\cap T| \leq \frac{\tau}{2}$. We set $Q$ to be any subset of $T \setminus X$ of size $|T| - \tau$.
  \newline \textbf{Case 2:} ${\frac{\tau}{2} < |X\cap T| \leq |T| - \frac{\tau}{2}}$.
    Let $Q_1$ be any subset of $X\cap T$ of size $|X \cap T|-\frac{\tau}{2}$ and let $Q_2$ be any subset of $T \setminus X$ of size $|T \setminus X|-\frac{\tau}{2}$.
    Let $Q \coloneqq Q_1 \dot\cup Q_2$.
 \newline    \textbf{Case 3:} $|T| - \frac{\tau}{2} < |X\cap T|$.
    Let $Q$ be any subset of $X\cap T$ of size $|T| - \tau$.
\end{definition}

The meaning of the trimming sets is explained by the
next corollary.

\begin{restatable}[Corollary of {\cite[Lemma 5]{Lampis12}}]{corollary}{CorLampis}\label{prop:lam}
    Let $\phi(\aleph)$ be any \MSOo formula with one free variable, $q_v$ vertex quantifiers, and $q_S$ set quantifiers.
    Let $\tau \geq 2 \cdot 2^{q_S}q_v$.
    Let $G$ be a labelled graph and $T \subseteq V(G)$ be a set of
    vertices of the same labelled type, such that
    $|T| \geq \tau$. Let $X \subseteq V(G)$.
    Then the $\tau$-trimming set $Q$ of $(G,X,T)$ satisfies:
    \begin{align}
      G \models \phi(X)  &\iff %
        G - Q \models \phi(X \setminus Q) 
            \label{eq:TwinsReduction-case1}\\
              G - Q \models \phi(X \setminus Q) &\implies 
            \begin{cases}   
              G \models \phi( X \setminus Q)  & \text{if } |X\cap T| \leq |T| - \frac{\tau}{2}, \\
              G \models \phi( X \cup Q)   & \text{if } |X \cap T| \geq \frac{\tau}{2}.
            \end{cases} \label{eq:TwinsReduction-case2} 
    \end{align}   
    \begin{align}
      \begin{cases}   \label{eq:XcapQ}
              |(T \setminus Q)\cap X|= |X\cap T| &  \text{if } |X\cap T| \leq \frac{\tau}{2}, \\
              |(T \setminus Q)\cap X|=\lfloor\frac{\tau}{2}\rfloor &  \text{if } \frac{\tau}{2}< |X\cap T| \leq |T| - \frac{\tau}{2}, \\
               |(T \setminus Q)\cap X|=\tau-|T\setminus X|  & \text{otherwise.}
            \end{cases}
          \end{align}%
          Moreover,
          $|T \setminus Q|=\tau$.
\end{restatable}
Observe that the above corollary allows us to reduce
the number of vertices of the same labelled type without
altering the logical value of the formula, provided
that there are sufficiently many such vertices.
  \sv{ \CorLampis* \begin{proof}[Proof of \cref{prop:lam}]}
    \lv{\begin{proof}}
  Let $Q$ be a $\tau$-trimming set of $(G,X,T)$.
Observe that $Q$ is defined so that $|T\setminus Q|=\tau$.
    Also, we can now easily check that $Q$ satisfies \cref{eq:XcapQ} in all three cases.

  We now prove \Cref{eq:TwinsReduction-case1}. 
  Given $X$, we can use \Cref{upd}. 
  In \textbf{Case 1}, we have $Q\subseteq T\setminus X$ and $|(T\setminus Q)\setminus X|\ge \frac{\tau}{2}$; hence, after labeling $X$, the vertices in $Q$ have the same labelled type as at least $\frac{\tau}{2}$ vertices of $T\setminus Q$. Therefore, we can directly apply \cref{ref:lam}.
  In \textbf{Case 3}, we have $Q\subseteq X\cap T$ and $|(T\setminus Q)\cap X|\ge \frac{\tau}{2}$; hence, after labeling $X$, the vertices in $Q$ have the same labelled type as at least $\frac{\tau}{2}$ vertices of $T\setminus Q$. Therefore, we can directly apply \cref{ref:lam}.
  In \textbf{Case 2}, we apply \cref{ref:lam} twice: first on $Q_1$ and then on $Q_2$.
  Still, after labeling $X$, in each such step, $Q_1$ (respectively $Q_2$) has the same labelled type as at least $\frac{\tau}{2}$ vertices of $T\setminus Q$ with the corresponding label status.
  Therefore, in all cases, we have
    \[G \models \phi(X)  \iff  G_X - Q \models \phi_X. \]
   Observe that $G_X - Q \models \phi_X$ is equivalent to $G_{X \setminus Q} - Q \models \phi_{X \setminus Q}$.
   Hence, we conclude the proof of \Cref{eq:TwinsReduction-case1} by applying \Cref{upd}.

   We conclude the proof by proving \Cref{eq:TwinsReduction-case2}. 
We consider two cases depending on the size of the set $X\cap T$.
Suppose $|X\cap T| \leq |T| - \frac{\tau}{2}$.
By construction of $Q$, we know that $T\setminus Q$ contains at least $\frac{\tau}{2}$ vertices \textbf{not} in $X$.
Assume $G-Q \models \phi(X\setminus Q)$.
Using \Cref{upd}, we get that this is equivalent to $G_{X\setminus Q}-Q \models \phi_{X\setminus Q}$.
In turn, there are at least $\frac{\tau}{2}$ vertices of the same labelled type \textbf{not} labelled by $X$, and we use this labeling on additional $|Q|$ vertices, which we add to $G-Q$, before we apply \Cref{ref:lam} to it.
Hence, we obtain $G_{X\setminus Q}-Q \models \phi_{X\setminus Q} \iff  G_{X\setminus Q} \models \phi_{X\setminus Q} 
$ as we know that the newly added vertices are not labelled by $X$.
We conclude by \cref{upd}.
Now, suppose $|X \cap T| \geq \frac{\tau}{2}$.
By construction of $Q$, we know that $T\setminus Q$ contains at least $\frac{\tau}{2}$ vertices in $X$.
Following the same steps as before, there are at least $\frac{\tau}{2}$ vertices of the same labelled type labelled by $X$, and we use this labeling on additional $|Q|$ vertices, which we add to $G-Q$, before we apply \Cref{ref:lam} to it.
Hence, we obtain 
$G-Q \models \phi(X\setminus Q) \iff  G_{X\cup Q} \models \phi_{X\cup Q}$ as we know that the newly added vertices are labelled by $X$.
We conclude by \cref{upd}.
\end{proof}

We now move on to developing tools that allow us to remove cliques from
the graph while maintaining the truthfulness of the formula.
\begin{definition}
    We say that two cliques have the same labelled clique type if there is a bijection between their vertices that preserves the labelled type.
\end{definition}

We now state an analogue of the irrelevant clique lemma for bounded twin cover \cite[Lemma 7\footnote{Lemma 8 in the full version of the paper \cite{KMT19-arxiv}}]{KMT19}. 
We decided to omit the proof, as literally the same proof without any changes works in our case as well.

\begin{lemma}[Corollary of the proof of {\cite[Lemma 8]{KMT19-arxiv}}]\label{lem:induction}
Let $G$ be a labelled graph with the minimum modulator $D_G$.
Let $\phi$ be an \MSOo sentence with
$q_v$ vertex quantifiers and $q_S$ set quantifiers. Suppose the size
of a maximum clique in $G - D_G$ is bounded by $\ell$. If there are strictly more than
$\gamma(q_S, q_v, \ell) \coloneqq 2^{\ell \cdot q_S}(q_v + 1)$
cliques of the same labelled clique type $T$, then there exists a clique $C$ of labelled clique type $T$ such
that $G \models \phi$ if and only if $G - C \models \phi$.
\end{lemma}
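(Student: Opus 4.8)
The plan is to prove this by an Ehrenfeucht--Fra\"iss\'e argument, following verbatim the strategy used for bounded twin cover in \cite[Lemma 8]{KMT19-arxiv} (itself an adaptation of the irrelevant-vertex technique of Lampis~\cite{Lampis12} that already underlies \Cref{ref:lam}). Since $\phi$ is a sentence, it suffices to exhibit a clique $C$ of labeled clique type $T$ for which the Duplicator wins the $(q_S,q_v)$-round \MSOo game on the pair $(G, G-C)$: a Duplicator win guarantees that $G$ and $G-C$ agree on every \MSOo sentence with at most $q_S$ set quantifiers and $q_v$ vertex quantifiers, and in particular on $\phi$. (If one prefers to state the lemma with a free variable, \Cref{upd} first reduces $\phi(X)$ to the sentence $\phi_X$ on the labeled graph $G_X$, so we may assume a sentence throughout.)

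First I would record the combinatorial reason the threshold $2^{\ell \cdot q_S}(q_v+1)$ is the right one. Because every clique of $G-D$ has at most $\ell$ vertices, a single set variable restricts to a given clique in one of at most $2^{\ell}$ ways; hence the choice of all $q_S$ sets refines the cliques of labeled clique type $T$ into at most $2^{\ell \cdot q_S}$ \emph{set-refined} subclasses, each still sharing identical labels and identical attachment to the modulator $D$. This last point is exactly what ``same labeled clique type'' buys us, so the deleted clique is interchangeable with its siblings as seen from $D$. Moreover the $q_v$ vertex moves pin down at most $q_v$ cliques. Thus whenever more than $2^{\ell \cdot q_S}(q_v+1)$ cliques share type $T$, the pigeonhole principle yields a set-refined subclass containing at least one clique that is never pebbled; this surplus clique is the irrelevant $C$.

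The heart of the argument is the Duplicator's strategy, which I would phrase as maintaining a balanced correspondence between $G$ and $G-C$ throughout the game. The invariant is that, after each round, for every refinement class determined by the sets chosen so far together with the pebbled vertices, the two boards carry the same number of cliques of that class, capped at (number of remaining vertex rounds) plus one. On a set move the Duplicator copies the Spoiler's membership pattern clique by clique, giving matching cliques (same labeled clique type, same pebble status) the same intersection with the new set; on a vertex move inside a clique the Duplicator answers in a clique of the identical refinement class, using one of the guaranteed spare cliques whenever the Spoiler lands on the image of the deleted $C$. A routine induction on $q_S+q_v$ shows the invariant survives each round---every set move multiplies the number of classes by at most $2^{\ell}$ while every vertex move decreases the budget by one---and that at the end the pebbled vertices, together with their set memberships and adjacencies, form a partial isomorphism, which is precisely a Duplicator win.

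The main obstacle is the bookkeeping for the interleaving of set and vertex quantifiers: the refinement classes are not fixed in advance but are created dynamically as the Spoiler chooses sets, so one must verify that the single budget $2^{\ell \cdot q_S}(q_v+1)$ dominates the worst-case branching at every stage rather than only at the end. Handling this cleanly is exactly the content of the induction in \cite[Lemma 8]{KMT19-arxiv}; since cliques of bounded size $\ell$ attach to $D$ identically within a labeled clique type, none of its steps are affected by moving from the twin-cover setting to the cluster-vertex-deletion setting, which is why the proof transfers without change.
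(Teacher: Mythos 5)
Your proposal follows essentially the same route as the paper, which gives no independent proof of this lemma but explicitly imports the irrelevant-clique/EF-game argument of \cite[Lemma 8]{KMT19-arxiv} verbatim, justified (as you justify it) by the observation that cliques of the same labeled clique type attach identically to the modulator, so nothing changes when passing from twin cover to cluster vertex deletion. One small caution: the cap in your Duplicator invariant should be $2^{\ell\cdot(\text{remaining set quantifiers})}\cdot(\text{remaining vertex quantifiers}+1)$ rather than merely the vertex budget plus one---otherwise a later set move could split a matched class beyond repair---but your subsequent accounting of the branching shows you intend exactly this.
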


The above lemma, which holds for sentences, effectively allows us to shrink the graph so that it does not contain too many similar cliques. 
Now, we want to prove a similar statement for \MSOo formulas with one free variable.

\begin{corollary}\label{cor:induction}
  Let $G$ be a labelled graph with the size of a maximum clique in
  $\Cc_G$ bounded by~$\ell$.
  Let $X\subseteq V(G)$, and let $\phi(\aleph)$ be any \MSOo formula with $q_v$ vertex quantifiers and $q_S$ set quantifiers. If there are strictly more than
  \sv{$\gamma(q_S,q_v, \ell) \coloneqq 2^\ell \cdot 2^{\ell \cdot q_S}(q_v+1)$}\lv{\[\gamma(q_S,q_v, \ell) \coloneqq 2^\ell \cdot 2^{\ell \cdot q_S}(q_v+1)\]} cliques of the same labelled clique type $T$, then there exists a clique $C$ of labelled clique type $T$ such that $G\models \phi(X)$ if and only if $G - C \models \phi(X \setminus C)$.
\end{corollary}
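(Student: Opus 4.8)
The plan is to reduce the statement to its sentence analogue, \cref{lem:induction}, by turning the free variable into a label. Following \cref{def:formula_to_sentene}, I would introduce a fresh label $l_X$, mark exactly the vertices of $X$ with it to obtain the labeled graph $G_X$, and replace every occurrence of the free variable in $\varphi(\aleph)$ by $l_X$ to get a \emph{sentence} $\varphi_X$. This substitution introduces no new quantifiers, so $\varphi_X$ still has $q_v$ vertex quantifiers and $q_S$ set quantifiers, and \cref{lem:induction} applies to it with the same $\gamma$. By \cref{upd} we have $G \models \varphi(X) \iff G_X \models \varphi_X$. The key structural observation is that for any clique $C$ one has $(G-C)_{X\setminus C} = G_X - C$: deleting the vertices of $C$ from $G_X$ removes precisely the $l_X$-labels sitting on $C\cap X$, so the remaining labeled vertices are exactly $X\setminus C$. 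Hence, again by \cref{upd}, $G_X - C \models \varphi_X \iff (G-C) \models \varphi(X\setminus C)$. It therefore suffices to exhibit a clique $C$ of labeled clique type $T$ that is \emph{irrelevant} for $\varphi_X$ in $G_X$, i.e.\ $G_X \models \varphi_X \iff G_X - C \models \varphi_X$.

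The second step is to guarantee that $G_X$ contains enough cliques of a single labeled clique type to feed into \cref{lem:induction}. Adding the label $l_X$ \emph{refines} labeled clique types: two cliques of type $T$ in $G$ share a labeled clique type in $G_X$ only if, in addition, the label $l_X$ lands on matching vertices under some label-preserving isomorphism. Since every clique in $\Cc_G$ has at most $\ell$ vertices, the number of distinct $l_X$-patterns realizable on a clique of type $T$ is bounded by a function of $\ell$ (at most $2^{\ell}$). I would therefore partition the more-than-$\gamma(q_S,q_v,\ell)$ cliques of type $T$ according to their $l_X$-pattern and apply the pigeonhole principle, so that some refined labeled clique type $T'$ of $G_X$ is realized by strictly more than $2^{\ell\cdot q_S}(q_v+1)$ cliques.

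Given such a $T'$, \cref{lem:induction} applied to $G_X$ and the sentence $\varphi_X$ produces a clique $C$ of refined type $T'$ — and hence of the original type $T$ — with $G_X \models \varphi_X \iff G_X - C \models \varphi_X$. Chaining the equivalences of the first paragraph then gives $G \models \varphi(X) \iff G_X \models \varphi_X \iff G_X - C \models \varphi_X \iff (G-C) \models \varphi(X\setminus C)$, which is exactly the claimed conclusion.

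The delicate point — and the only place needing genuine care — is the constant in the pigeonhole step, namely reconciling the refinement factor introduced by $l_X$ with the margin built into $\gamma(q_S,q_v,\ell) = 2\cdot 2^{\ell\cdot q_S}(q_v+1)$. If the cliques counted in the hypothesis are understood to already agree on the restriction of $X$ (equivalently, $T$ is read as a labeled clique type of $G_X$), then no refinement is lost and the factor $2$ is simply slack over the threshold of \cref{lem:induction}; otherwise one must verify that enough cliques of the $G$-type $T$ share a single $l_X$-pattern, balancing the pattern count against that margin. Everything else is a mechanical transfer of \cref{lem:induction} through \cref{upd}.
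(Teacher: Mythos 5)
Your proposal follows essentially the same route as the paper's own proof: relabel $X$ as $l_X$ via \cref{def:formula_to_sentene}, transfer the statement through \cref{upd} using the identity $(G-C)_{X\setminus C}=G_X-C$, find enough cliques of a single refined labeled clique type, and invoke \cref{lem:induction}. The ``delicate point'' you flag at the end is genuine, and the paper is no more careful than you are: its proof simply asserts that the factor $2$ in $\gamma$ absorbs the refinement introduced by $l_X$, which is only immediate when the cliques of type $T$ already agree on their $X$-pattern up to isomorphism (as they do where the corollary is actually applied, via \cref{shape:compliant_set} in the proof of \cref{lem:shape_equivalence_one}), so your explicit accounting of the $l_X$-pattern count is, if anything, a more honest treatment of the same argument.
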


\begin{proof} %
  Let $G_X$ and $\phi_X$ be the labelled graph and the logic sentence
  as defined in \cref{def:formula_to_sentene}. It suffices to prove that
  $G_X \models \phi_X \iff G_X - C \models \phi_X$.
  Observe that in the graph $G_X$ there exist strictly more than
  $2^{\ell \cdot q_S}(q_v+1)$ cliques of the same labelled clique type:
  indeed, in a clique of size at most $\ell$, the label $\ell_X$ can appear
  in at most $2^\ell$ different ways, so among the strictly more than
  $2^\ell \cdot 2^{\ell \cdot q_S}(q_v+1)$ cliques of clique type $T$ in $G$,
  at least one labelled clique type appears strictly more than
  $2^{\ell \cdot q_S}(q_v+1)$ times in $G_X$.
  The claim now follows from \cref{lem:induction}.
\end{proof}
}%

\lv{With the above tools, we are ready to prove Lemma~\ref{lem:shape_equivalence_one}.}

\toappendix{

\ShapeEquivalence*

\begin{proof}[Proof of Lemma~\ref{lem:shape_equivalence_one}]
  To prove the lemma we first set
  $f(|D_G|,|\phi|)=2^\ell \cdot 2^{\ell \cdot q_S} \cdot (q_v+1)$
  for $\ell=2 \cdot 2^{q_S}q_v \cdot 2^{|D_G|}$.%
  Let $G$ be a graph %
and let $X \subseteq V(G)$ be an $\alpha$-compliant set as in the
statement of the lemma.
Let $(G',X')$ be the canonical
pair associated with shape $\shapefn^{(\alpha,\gamma)}(G,X)$.
We need to prove that $G' \models \phi(X') \iff G \models \phi(X)$.

Let us now consider a trimming process applied to $(G,X)$, defined as
follows. In Phase~$1$, while there is a
clique $C \in \Cc_G$ with
$|C_\nT| > \alpha+1$ for some $\nT \in \ntypes^{(|D_G|)}$,
we find an $(\alpha+1)$-trimming set
$Q$ for $(G,X,C_\nT)$ according to Definition~\ref{def:trimmingSet}. For that, we interpret $G$ as a labelled graph with an empty set of labels. We then reduce $(G,X)$ to
$(G - Q,X \setminus Q)$ without changing the logical
value of the formula, based on Corollary~\ref{prop:lam}.
In Phase $2$, if there are more than $\gamma$ cliques of
some type $\cT$ on which $X$ matches some solution pattern
$\sP$, then we can iteratively reduce their number to $\gamma$ while preserving the logical value of the
formula using Corollary~\ref{cor:induction}. Let now
$(G'',X'')$ be the pair obtained from $(G,X)$ after
the trimming process completes. By Corollary~\ref{prop:lam} and Corollary~\ref{cor:induction}, we get that $G \models \phi(X) \iff G'' \models \phi(X'')$. To complete the proof
it is sufficient to show that $(G',X')$ is isomorphic to
$(G'',X'')$.

So let $C \in \Cc_G$ be some clique of $G-D_G$ and let
us assume $C$ was not removed in Phase $2$. Let us first consider how the trimming process affects $C$ and
$X \cap C$.
Let $\nT$ be such that the trimming applies to $C_\nT$, i.e.,
$|C_\nT| > \alpha+1$.
Since $X$ is $\alpha$-compliant,
$$\begin{cases}
|(C_\nT \setminus Q) \cap X| = |X \cap C_\nT| \text{\;\;\;\;\;\;\;\;\;\; \;\;\;\;\;    if }
 |X \cap C_\nT| < \alpha/2, \text{ and }\\
|(C_\nT \setminus Q) \cap X| =\alpha+1-(|C_\nT \setminus X|)  \text{\; otherwise. }
\end{cases}$$
Moreover, if $C_\nT''$ is the set $C_\nT$ after the
trimming process completes, then $|C_\nT''|=|C_\nT \setminus Q|=\alpha+1$.

Now, let us consider how the clique $C$ contributes to
$\shapefn^{(\alpha,\gamma)}(G,X)$.
Let $\cT$ be an $(\alpha+1)$-clique type of $C$.
Let
$\sP\in\spatterns^{(\alpha,|D_G|)}$ be a solution pattern
such that $X$ matches $\sP$ on $C$. Thus, in $\shapefn^{(\alpha,\gamma)}(G,X)$, we get
$M[\sP,\cT] > 0$ and $C$ contributes towards $M[\sP,\cT]$.
Hence, in $(G',X')$, there is a clique $C'$ contributing
towards $M[\sP,\cT]$. Let us consider how the solution
$X'$ is distributed within $C'$.

Let us consider $\nT$ such that $|C_\nT| \geq \alpha+1$;
then
we have $\cT[\nT]=\alpha+1$.
By Definition~\ref{def:XmatchessP},
$$\begin{cases}
\sP[\nT]=|C_\nT \cap X| \text{\;\;\;\;\;\;\;\; if }
|X \cap C_\nT|<\alpha/2, \text{ and }\\
\sP[\nT]=\alpha-|\NOT{X}\cap C_\nT| \text{\; otherwise.}
\end{cases}$$
By Definition~\ref{def:associatedGraph}, $|C'_\nT|=\alpha+1$ and
$$\begin{cases}
|X' \cap C'_\nT|=|C_\nT \cap X| \text{\;\;\;\;\;\;\;\;\;\;\;\;\;\;\;\;\;\;\;\;\;\; \; \; \; \; \; \; \; \; \; \; \; if }
|X \cap C_\nT|<\alpha/2, \text{ and }\\
|X' \cap C'_\nT|=|C'_\nT|-\alpha+\sP[\nT]=
\alpha+1-|C_\nT \setminus X| \text{\; otherwise.}
\end{cases}$$
We conclude that $|C'_\nT|=|C_\nT''|$ and
$|X' \cap C'_\nT|=|X \cap (C_\nT \setminus Q)|=|X'' \cap C_\nT''|$.

Let us now
consider the case when $\nT$ is such that
$|C_\nT| \leq \alpha$. In that case the trimming does not
affect $C_\nT$: $C_\nT''=C_\nT$ and $X'' \cap C_\nT''=X \cap C_\nT$.
Also, $\cT[\nT]=|C_\nT|$ and $\sP[\nT]=|X \cap C_\nT|$. Thus,
$|C'_\nT|=|C_\nT|$ and $|C'_\nT \cap X'|=|C_\nT \cap X|$.
We again conclude that $|C'_\nT|=|C_\nT''|$ and
$|X' \cap C'_\nT|=|X \cap C_\nT|=|X'' \cap C_\nT''|$.

Let us now discuss the situation when the considered clique
$C$ is removed in Phase $2$ of the trimming process. In that
case there are $\gamma$ cliques $C''_1 \ldots C''_\gamma$
isomorphic to $C''$ in $G''$ with $X''$ distributed in the
exact
same way. Similarly, there are $\gamma$ cliques
$C'_1 \ldots C'_\gamma$ isomorphic to $C'$ with $X'$
distributed in the
exact
same way. $C'$ is in turn
isomorphic to $C''$, and $X' \cap C'$ is distributed in the
same way as $X'' \cap C''$.
Consequently, after the trimming process completes, we obtain
a pair
$(G'',X'')$ isomorphic to $(G',X')$. This proves the claim.
\end{proof}

\lv{
\subsection{Proofs of Main Model-checking Statements}\label{sec:mc:mainProofs}

Before we prove \cref{lem:compliant} and \cref{lem:shape_equivalence_one}, we formulate the model-checking corollary.
}%
\sv{We now formulate the model-checking corollary.}
\begin{corollary}\label{cor:mc}
  \MSOo model-checking can be done in \FPT time parameterized by the cluster vertex deletion and size of the formula.
\end{corollary}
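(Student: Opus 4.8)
The plan is to prove the corollary by \emph{kernelization}. Given an \MSOo sentence $\phi$, with $q_v$ vertex and $q_S$ set quantifiers, and a graph $G$, I would first compute a smallest modulator $D=\MSs$ in \FPT time (cluster vertex deletion is itself \FPT in the \cvd number), and then repeatedly apply the two reduction statements established above to shrink $G$ to an equivalent graph $G^\star$ whose size is bounded by a computable function of $|D|$ and $|\phi|$ alone. Since deciding $G^\star\models\phi$ can then be done by brute force, this yields the claimed running time. As $\phi$ is a sentence there are no labels to track, so the notions of \emph{label type} and \emph{labeled clique type} coincide with \emph{neighborhood type} and \emph{clique type} from \cref{sec:prelims}.

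First I would bound the clique sizes. Fix a clique $C\in\Cc_G$ and a neighborhood type $\nT\in\ntypes$. All vertices of $C_\nT$ are pairwise true twins: each is adjacent to every other vertex of $C$, and all of them share the neighborhood $\nT$ in the modulator, so any two of them have identical neighborhoods up to themselves and in particular the same label type. Applying \cref{ref:lam} with $T\df C_\nT$ therefore lets me delete vertices of $C_\nT$ down to $|C_\nT|\le 2^{q_S}q_v$ without changing whether $G\models\phi$. Carrying this out for every clique and every neighborhood type produces an equivalent graph in which each clique has at most $\ell\df 2^{|D|}\cdot 2^{q_S}q_v$ vertices, a bound depending only on $|D|$ and $|\phi|$.

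Next I would bound the number of cliques. With clique sizes capped by $\ell$ and with $|D|$ modulator vertices, the number of distinct clique types is at most $(\ell+1)^{2^{|D|}}$, again a function of $|D|$ and $|\phi|$ only. Now \cref{lem:induction} applies: whenever more than $\gamma(q_S,q_v,\ell)=2^{\ell\cdot q_S}(q_v+1)$ cliques share a labeled clique type, one of them may be deleted while preserving $\models\phi$. Iterating this rule leaves at most $\gamma(q_S,q_v,\ell)$ cliques of each type, hence a total number of cliques bounded by $(\ell+1)^{2^{|D|}}\cdot\gamma(q_S,q_v,\ell)$. The resulting graph $G^\star$ satisfies $G\models\phi\iff G^\star\models\phi$ and has $|V(G^\star)|$ bounded by a computable function of $|D|$ and $|\phi|$; deciding $G^\star\models\phi$ by exhaustively testing all assignments to the quantified variables then costs time depending only on $|V(G^\star)|$ and $|\phi|$. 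Since every reduction step runs in polynomial time once $D$ is known, and $D$ is obtained in \FPT time, the whole procedure is \FPT in the cluster vertex deletion number and $|\phi|$.

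The part I expect to require the most care is the bookkeeping that makes the two reductions interact correctly: one must check that the vertices of each $C_\nT$ genuinely form a single label type so that \cref{ref:lam} is applicable (this is exactly the twin property within a clique, and it is why I apply the lemma per clique rather than to a whole neighborhood-type class of $G$), and that after the first reduction the clique-type count is truly bounded in the parameters, so that iterating \cref{lem:induction} terminates with a kernel of parameter-bounded size. As an alternative to building the kernel explicitly, one could route the argument through the shape machinery: viewing $\phi$ as a formula $\phi(\aleph)$ that ignores its free variable and taking the (trivially $\alpha$-compliant) set $X=\emptyset$, \cref{lem:shape_equivalence_one} shows that $G\models\phi$ is determined solely by the parametrically bounded shape agreeing with $\emptyset$, which can be evaluated directly via \cref{def:shapeEvaluation}.
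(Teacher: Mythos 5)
Your primary argument is correct, but it takes a genuinely different route from the paper. You build an explicit kernel: compute the modulator, apply \cref{ref:lam} to each twin class $C_\nT$ to cap clique sizes at $\ell = 2^{|D|}\cdot 2^{q_S}q_v$, then apply \cref{lem:induction} to cap the number of cliques per (now parametrically few) clique types, and brute-force the resulting bounded-size instance. The paper instead proves the corollary through its shape machinery: it fixes $\alpha$ and $\gamma$, invokes \cref{lem:compliant} to restrict to $\alpha$-compliant sets, enumerates all shapes in $\shapes^{(\alpha,\gamma,|\MSs|)}$, model-checks each on its associated (parametrically bounded) graph, and appeals to \cref{lem:shape_equivalence_one} for correctness --- the authors themselves note this is ``not the most simplified possible'' and is chosen to exercise the tools needed later for the fair-cost ILP. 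Your alternative remark (evaluate the unique shape agreeing with $X=\emptyset$ via \cref{lem:shape_equivalence_one}) is essentially the paper's proof specialized to a sentence. Your kernelization is more elementary and self-contained for pure model-checking; the paper's version buys a dry run of the shape pipeline. You were also right to flag the one delicate point: \cref{ref:lam} must be applied to a genuine twin class, i.e., to $C_\nT$ within a single clique rather than to all vertices of neighborhood type $\nT$ across $G$, since vertices in different cliques with the same modulator-neighborhood are not twins; your handling of this is exactly how the paper uses the lemma elsewhere.
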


\begin{proof}[Proof sketch]
     Let $q_S, q_v$ be the numbers of set quantifiers and vertex quantifiers in $\varphi$, respectively, and let $d \coloneqq |\MSs|$, i.e., the size of the modulator in $G$.
     We specify $\alpha \coloneqq 2\cdot 2^{q_S}q_v$,
     $\gamma \coloneqq 2^{2^d\alpha}\cdot 2^{2^d\alpha \cdot q_S}(q_v+1)$ (recall that $\alpha,\gamma$ are chosen as in \cref{prop:lam} and \cref{cor:induction}).
     \Cref{lem:compliant} states that we can find the optimal solution even when restricted to $\alpha$-compliant sets.
     Therefore, we iterate over all $\shape\in\shapes^{(\alpha,\gamma,d)}$, whose number is bounded by the parameters.
     On each $\shape$ we model-check the formula on $\shape$, which leads to model-checking on the associated graph and set (consult \cref{def:associatedGraph}), whose sizes are bounded by the parameters.
     We conclude by \cref{lem:shape_equivalence_one}, which establishes correctness of the approach above, as $\varphi(\shape)$ is true if and only if there is an $\alpha$-compliant set $X$ such that $G\models \varphi(X)$.
\end{proof}

We note that the above approach to model-checking is not the simplest;
however, it illustrates the overall approach to model-checking using the shapes and other tools we build in order to be able to work towards the fair problems.
We conclude the section by proving \Cref{lem:compliant}.

\lv{ \ComSolution*}
\begin{proof}[Proof of \cref{lem:compliant}]

Let $f(|\MSs|, |\phi|) \coloneqq 2 \cdot 2^{q_S}q_v$ and fix any odd $\alpha \geq f(|\MSs|, |\phi|)$. The statement is trivial if $X$ is $\alpha$-compliant on all $C_\nT$ where $C\in \Cc$ and $\nT \in \ntypes$. 
Assume that there exists $C_\nT$ on which $X$ is not $\alpha$-compliant. Observe that
$\frac{\alpha}{2} < |X \cap C_\nT| < |C_\nT| - \frac{\alpha}{2}$.
Let $Q$ be a \emph{$\tau$-trimming set} of $(G,X,C_\nT)$ of size $|C_\nT| - \lfloor\frac{\alpha}{2}\rfloor$, as defined in \cref{def:trimmingSet}. 
    
Thanks to \cref{eq:TwinsReduction-case1} we know that $G - Q \models \phi(X \setminus Q)$. Then we can apply the first case of \cref{eq:TwinsReduction-case2} to obtain $G \models \phi(X \setminus Q)$. According to \cref{eq:XcapQ} we get that $|(C_\nT \setminus Q) \cap X| = \lfloor\frac{\alpha}{2}\rfloor$, and trivially
\[
|(C_\nT \setminus Q) \cap X| = |(X \setminus Q) \cap C_\nT| = \lfloor\frac{\alpha}{2}\rfloor.
\]
Let $X' \coloneqq X \setminus Q$ and observe that $X' \subseteq X$ and $G \models \phi(X')$. We can apply the previous procedure to every $C_\nT$, each time reducing the set $X'$. Obviously, after the last reduction the set $X'$ is $\alpha$-compliant. The other two properties are also satisfied because $X' \subseteq X$ and we maintain that $G \models \phi(X')$.
\end{proof}

}%

\section{\FPT Algorithm}\label{sec:FPT}
The main result of this section is an \FPT algorithm for some \MSOo-\FairVE problems parameterized by $|D_G|$ and $|\phi|$, which proves \cref{thm:main-fpt}.
Recall that we cannot give an \FPT algorithm for the most general version of the problem, as we show in \cref{sec:hardness} that the problem is \W{1}-hard. 
Hence, here, with the conditions we impose on the problem instance,
we are close to the fine line where the \FairVE problems become hard.
Now, we state our main theorem.

\begin{restatable}[FPT algorithm]{theorem}{fptThm}\label{thm:main-fpt}
     There is a computable function ${f:\mathbb{N}^2\rightarrow \mathbb{N}}$ such that for any \MSOo formula $\phi(\aleph)$ with one free variable, for any graph $G$,
     and any $\alpha \in \nodd$ and $\gamma \in \mathbb{N}$ such that $\alpha, \gamma > f(|D_G|, |\phi|)$, the following is true:

     Let $\Xx$ be the collection of all $\alpha$-compliant sets
     $X\subseteq V(G)$ such that $G\models \phi(X)$.
     If there exists $X\in\Xx$ such that
     $\fc{X}\le \min_{X'\in\Xx} \fc{X'}$ and such that $\shapefn^{(\alpha,\gamma)}(G,X)$ is coherent,
     then we can solve the \textsc{\MSOo-FairVE} problem in
     \FPT time parameterized by $|\phi|$, $\alpha$, $\gamma$,
     and $|D_G|$.
     \end{restatable}

Now we will provide some intuition on how to
prove \cref{thm:main-fpt}.
Due to \cref{lem:shape_equivalence_one}, the logical value
of $G \models \phi(X)$ depends purely on $\shapefn^{(\alpha,\gamma)}(G,X)$.
To conclude, observe that to prove \cref{thm:main-fpt}, it
is enough
to find a set $X \subseteq V(G)$ of minimum fair cost
such that
$\shape=\shapefn^{(\alpha,\gamma)}(G,X)$ for a coherent 
$\shape \in \shapes^{(\alpha, \gamma, |\MSs|)}$.  To do so,
we formulate an ILP instance whose constraints encode the required fair cost.
We first introduce a function $\cliFC()$ used to encode
the fair cost contribution of vertices in the cliques.
\begin{definition}\label{def:cfc}
  For every graph $G$, every clique $C \in \Cc_G$, any $\nT^* \subseteq \MSs$, and every solution pattern $\sP \in \spatterns^{(\alpha, |\MSs|)}$, we define the fair cost of the clique $C$ as follows, where $\cT$ is the clique type of $C$:
\begin{align*}
  \cliFC(C, \sP, \nT^*) &= \sum_{\substack{\nT \in \ntypes :\\ (\sP[\nT] < \alpha / 2 \\ \text{ or } \cT[\nT] \leq \alpha)}} \sP[\nT]
  +\sum_{\substack{\nT \in \ntypes :\\ (\cT[\nT]=\alpha+1 \\ \text{ and } \sP[\nT] > \alpha / 2)}} \left( |C_\nT| - \alpha + \sP[\nT]\right)  \\
                        &\quad+\max_{\nT \in \ntypes : C_\nT \neq \emptyset}\Delta(\nT), \text{ where}
\end{align*}
\[
\Delta(\nT)=
\begin{cases}
  |\nT \cap \nT^*|-1 & \text{if } \sP[\nT]=\alpha \text{ or } \sP[\nT]=\cT[\nT], \\
  |\nT \cap \nT^*|   & \text{otherwise.}
\end{cases}
\]
\end{definition}

\begin{observation}[\sv{$\clubsuit$}]\label{obs}
Let $G$ be a graph, $C \in \Cc_G$, $\nT^* \subseteq \MSs$, and $\sP \in \spatterns^{(\alpha, |\MSs|)}$, and let $\cT$ be the clique type of $C$.
It holds that $\cliFC(C, \sP, \nT^*) = \max_{v \in C} |N(v)\cap X|$, assuming that the solution $X$ is distributed on the clique $C$ and the modulator $\MSs$ according to $\sP$ and $\nT^*$, respectively.    
\end{observation}
\toappendix{
  \sv{\section{Missing proof from \cref{sec:FPT}}}
  \lv{\begin{proof}}
    \sv{\begin{proof}[Proof of \cref{obs}]}
Consider $\nT \subseteq D_G$. Let $X$ be a solution matching $\sP$ on $C$ and such that $X \cap D_G = \nT^*$. If $\sP[\nT] < \alpha$ and $\sP[\nT] \neq \cT[\nT]$, then there is a vertex $v \in \C_{\nT} \setminus X$. Such vertex $v$ is a neighbor of $$\sum_{\nT' \in \ntypes} |X \cap C_{nT'}|+ |\nT^* \cap \nT|=$$ $$ \sum_{\substack{\nT \in \ntypes :\\ (\sP[\nT] < \alpha / 2 \\ \text{ or } \cT[\nT] \leq \alpha)}} \sP[\nT]
  +\sum_{\substack{\nT \in \ntypes :\\ (\cT[\nT]=\alpha+1 \text{ and } \\  \sP[\nT] > \alpha / 2)}} \left( |C_\nT| - \alpha + \sP[\nT]\right) 
                        +\Delta(\nT)$$
vertices in $X$. If $\sP[\nT]=\cT[\nT]$ or $\sP[\nT]=\alpha$, then $C_\nT \subseteq X$, and hence any vertex $v \in C_\nT$ is a neighbor of $$\sum_{\nT' \in \ntypes} |X \cap C_{nT'}| -1 + |\nT^* \cap \nT|= $$$$\sum_{\substack{\nT \in \ntypes :\\ (\sP[\nT] < \alpha / 2 \\ \text{ or } \cT[\nT] \leq \alpha)}} \sP[\nT]
  +\sum_{\substack{\nT \in \ntypes :\\ (\cT[\nT]=\alpha+1 \text{ and } \\  \sP[\nT] > \alpha / 2)}} \left( |C_\nT| - \alpha + \sP[\nT]\right) 
                        +\Delta(\nT)$$ vertices in $X$, as we consider here open neighborhoods.  Since the fair cost maximizes the number of vertices of $X$ in the neighborhood, we consider the maximum $\Delta(\nT)$. 
\end{proof}
}

\begin{lemma}\label{lem:coherent_shapeILP}
  Let $G$ be a graph and $\phi(\aleph)$ be a formula with one free variable.
  Let $f$ be the function from \cref{lem:shape_equivalence_one} and let $\alpha, \gamma>f(|\MSs|,|\phi|)$ be integers.
  Then there is an \FPT algorithm parameterized by
  $|\phi|, |\MSs|,\alpha,\gamma$ which, for every coherent
  shape $\shape = (\nT^*, \mtx) \in \shapes^{(\alpha, \gamma, |\MSs|)}$
  and any $k\in\mathbb{N}$, decides if there exists an
  $\alpha$-compliant set $X \subseteq V(G)$ such that
  $G \models \phi(X)$, $\shapefn^{(\alpha,\gamma)}(G,X)=\shape$,
  and $\fc{X} \leq k$.
\end{lemma}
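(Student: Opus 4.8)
The plan is to reduce the statement to the feasibility of a single integer linear program whose dimension is bounded by the parameters, and then invoke \cref{thm:ILPinBoundedDimension}. First I would dispose of the formula. By \cref{def:shapeEvaluation} the value $\models\phi(\shape)$ is evaluated on the associated graph $G'$ of \cref{def:associatedGraph}, whose size is bounded in $\alpha,\gamma,|\MSs|$, hence computable in \FPT time. By \cref{lem:shape_equivalence_one}, every $\alpha$-compliant $X$ that agrees with $\shape$ satisfies $G\models\phi(X)$ iff $\models\phi(\shape)$; so if $\models\phi(\shape)$ is false I answer ``no'', and otherwise it remains only to decide whether some $\alpha$-compliant $X$ agreeing with $\shape$ has $\fc{X}\le k$. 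This separation of the logical check from the numeric check is what makes an ILP possible at all.

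For the numeric part I introduce, for each truncated clique type $\cT\in\ctypes^{(\alpha+1,|\MSs|)}$ and each pattern $\sP\in\spatterns$, an integer variable $x_{\cT,\sP}\ge 0$ counting how many cliques of $\Cc_G$ of truncated type $\cT$ are assigned pattern $\sP$. The number of variables is $|\ctypes^{(\alpha+1,|\MSs|)}|\cdot|\spatterns|$, bounded by $\alpha$ and $|\MSs|$. Writing $n_\cT$ for the number of cliques in $\Cc_G$ of truncated type $\cT$, agreement with $\shape$ is encoded by $x_{\cT,\sP}=\mtx[\sP,\cT]$ when $\mtx[\sP,\cT]<\gamma$, by $x_{\cT,\sP}\ge\gamma$ when $\mtx[\sP,\cT]=\gamma$, together with $\sum_{\sP}x_{\cT,\sP}=n_\cT$ for every $\cT$; these already force infeasibility whenever $\shape$ is not realizable in $G$. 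Now the coherence of $\shape$ enters: for a modulator vertex $u$, $|\nb{u}\cap X|$ equals the constant $|\nbd{\MSs}{u}\cap\nT^*|$ plus $\sum_{\cT}\sum_{\nT\ni u}\sum_{C}|X\cap C_\nT|$, and for each fixed $(\cT,\nT)$ coherence makes all used patterns simultaneously bounded/thin, giving $|X\cap C_\nT|=\sP[\nT]$, or simultaneously fat, giving $|X\cap C_\nT|=|C_\nT|-(\alpha-\sP[\nT])$. Hence $\sum_C|X\cap C_\nT|$ is either $\sum_\sP x_{\cT,\sP}\,\sP[\nT]$ or $S_{\cT,\nT}-\sum_\sP x_{\cT,\sP}(\alpha-\sP[\nT])$, where the constant $S_{\cT,\nT}=\sum_{C\text{ of type }\cT}|C_\nT|$ is read off $G$. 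Each of the $\le|\MSs|$ modulator constraints $|\nb{u}\cap X|\le k$ is therefore linear in the $x_{\cT,\sP}$.

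The remaining constraints are the per-clique bounds $\cliFC(C,\sP,\nT^*)\le k$ from \cref{def:cfc}, and here lies the \textbf{main obstacle}. Using coherence to fix which coordinates of $\cT$ are fat, $\cliFC(C,\sP,\nT^*)$ splits as $s(C)+g_\cT(\sP)$, where $s(C)=\sum_{\nT\text{ fat for }\cT}|C_\nT|$ depends only on $C$ while $g_\cT(\sP)$ depends only on $(\cT,\sP,\nT^*)$. Thus a clique $C$ of type $\cT$ may receive pattern $\sP$ only if $s(C)\le k-g_\cT(\sP)=:T_{\cT,\sP}$, and since different cliques of the same truncated type can have different $s(C)$ (and $s(C)$ is not bounded by the parameters), I cannot simply forbid patterns globally. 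I resolve this by exploiting that, for a fixed $\cT$, the thresholds $T_{\cT,\sP}$ take only $\le|\spatterns|$ distinct values and the eligibility sets $\{C:s(C)\le t\}$ are nested. Sorting the patterns of $\cT$ by threshold, Hall's condition for this nested bipartite eligibility shows that an assignment of cliques to patterns respecting the counts $x_{\cT,\sP}$ exists iff the staircase inequalities $\sum_{\sP:\,T_{\cT,\sP}\le t}x_{\cT,\sP}\le b_{\cT,t}$ hold for every threshold value $t$, where $b_{\cT,t}=|\{C\text{ of type }\cT:s(C)\le t\}|$ is a constant computed from $G$ in polynomial time; any clique whose fatsize exceeds the largest threshold makes the program infeasible, as it must. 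These are again linear in the $x_{\cT,\sP}$, and their number is bounded by the parameters.

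Collecting the shape-consistency, counting, modulator, and staircase constraints yields an ILP whose numbers of variables and constraints are bounded by $\alpha,\gamma$, and $|\MSs|$, and whose coefficients ($S_{\cT,\nT}$, $b_{\cT,t}$, $k$, all at most $|\Vertices(G)|$) have bit-length $\mathcal{O}(\log|\Vertices(G)|)$. \cref{thm:ILPinBoundedDimension} then decides feasibility in \FPT time. A feasible point recovers the assignment of patterns to cliques, and hence an $\alpha$-compliant $X$ agreeing with $\shape$ with $\fc{X}\le k$ and, by the first paragraph, $G\models\phi(X)$; infeasibility certifies that no such $X$ exists. The only genuinely nontrivial point is the size-dependent fair cost of fat cliques, handled by the nested-threshold argument, which crucially relies on $\shape$ being coherent.
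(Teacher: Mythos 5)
Your proposal is correct and follows the paper's overall skeleton — first evaluate $\models\phi(\shape)$ on the bounded-size associated graph and discharge the logical condition via \cref{lem:shape_equivalence_one}, then encode the remaining numeric question as an ILP over variables counting (clique type, solution pattern) assignments, with shape-consistency constraints and modulator fair-cost constraints linearized exactly as the paper does (your $S_{\cT,\nT}-\sum_\sP x_{\cT,\sP}(\alpha-\sP[\nT])$ is the paper's $thick(v)$, and you correctly pinpoint that coherence is what makes this sum independent of \emph{which} cliques receive which pattern). The one place you genuinely diverge is the per-clique fair-cost constraint $\cliFC(C,\sP,\nT^*)\le k$. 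The paper refines the variables: it partitions the cliques of each truncated type $\cT$ into classes $N_\cT^{\mathbf{S}}$ indexed by the subset $\mathbf{S}\subseteq\spatterns$ of patterns affordable for that clique, introduces variables $X_\cT^{\mathbf{S},\sP}$ with $\sP$ ranging only over $\mathbf{S}$, and adds the partition equalities $\sum_{\sP\in\mathbf{S}}X_\cT^{\mathbf{S},\sP}=|N_\cT^{\mathbf{S}}|$, so affordability is enforced structurally at the cost of a $2^{|\spatterns|}$ blow-up in the number of variables. You instead keep the coarser variables $x_{\cT,\sP}$, observe that coherence makes $\cliFC(C,\sP,\nT^*)=s(C)+g_\cT(\sP)$ with a single clique-dependent scalar $s(C)$, and replace the partition by the Gale--Ryser/Hall staircase inequalities for the resulting nested eligibility structure. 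Both yield parametrically bounded ILPs; yours is somewhat leaner (fewer variables, and it makes explicit \emph{why} a count vector is realizable by an actual assignment), while the paper's avoids the matching argument entirely by pre-sorting cliques into affordability classes. Your decomposition of $\cliFC$ does rely on the fat coordinate set being uniform across all patterns used by $\cT$, which is exactly what coherence guarantees for unbounded pairs (bounded pairs contribute only $\cT$-determined constants), so the argument is sound.
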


\begin{proof}
We start with a description of the algorithm and later prove its correctness and time complexity.
First, we need to check whether $\shape$ satisfies the formula. If not, we can already conclude that such a set $X$ does not exist, since any $X$ that agrees with $\shape$ satisfies $G \models \phi(X)$ if and only if $\models \phi(\shape)$ by \Cref{lem:shape_equivalence_one}.
Assume now that $\shape$ satisfies the formula $\phi(\aleph)$.

Let $\cT \in\ctypes^{(\alpha+1,|\MSs|)}$ and denote $\Cc_G$ by $\Cc$.
By $\Cc_{\cT}$ we denote the set of cliques from $\Cc$ whose
$(\alpha+1)$-clique type is $\cT$.
The shape $\shape$ provides us with
solution patterns
$\spatterns^{(\alpha, |\MSs|)}$ used to index the matrix $\mtx$.
For conciseness, in this proof, we omit the superscript $(\alpha, |\MSs|)$ as it is clear from the context.
For each $\mathbf{S} \subseteq \spatterns$, let
\[
  \Cc_{\cT}^{\mathbf{S}} = \left\{C \in \Cc_{\cT} : \{\sP :  \cliFC(C, \sP, \nT^*) \leq k \}  = \mathbf{S}  \right\}.
\]

Informally speaking, $\Cc_{\cT}^{\mathbf{S}}$ is a set of cliques of type $\cT$ that have fair cost at most $k$ on precisely the solution patterns from $\mathbf{S}$.
Observe that the set $\Cc_{\cT}^{\mathbf{S}}$
partitions the cliques in $\Cc_\cT$ according to $\mathbf{S}$.
Introducing sets $\mathbf{S}$ allows us to stop worrying about the fair cost of vertices within the cliques. It remains to enforce that the fair cost of vertices in the modulator is at most $k$.
\lv{

}%
Let $X_{\cT}^{\mathbf{S}, \sP}$ be a variable representing the number of cliques from $\Cc_{\cT}^{\mathbf{S}}$ that use $\sP$ as a solution pattern. We formulate \cref{il_faircost} for all subsets $\mathbf{S}$ of $\spatterns$.
We also need to ensure that the variables $X_{\cT}^{\mathbf{S}, \sP}$ are compatible with $\shape$, which is ensured by \cref{ilp_compile1} and \cref{ilp_compile2}.

\noindent
\begin{align}
    \sum_{\sP \in \mathbf{S}}  X^{\mathbf{S}, \sP}_{\cT} = |\Cc_{\cT}^{\mathbf{S}}| && \forall_{\cT \in \ctypes} \forall_{\mathbf{S} \subseteq \spatterns } \label{il_faircost} \\
    \sum\limits_{\mathbf{S}:\sP\in\mathbf{S} }  X^{\mathbf{S}, \sP}_{\cT} = \mtx[\sP, \cT] && \forall_{\cT \in \ctypes} \forall_{\sP \in \spatterns } : \mtx[\sP, \cT] < \gamma \label{ilp_compile1} \\ 
    \sum\limits_{\mathbf{S}:\sP\in\mathbf{S} }  X^{\mathbf{S}, \sP}_{\cT} \geq \gamma  && \forall_{\cT \in \ctypes} \forall_{\sP \in \spatterns } : \mtx[\sP, \cT] = \gamma \label{ilp_compile2}
\end{align}

We have ensured that the solution will comply with $\shape$ and that the fair cost of every vertex inside any clique is at most $k$. 
It remains to bound the fair cost for every $v \in \MSs$. To do so, we need to distinguish two cases, as we only consider coherent shapes. 
For all cliques $C$ of a specific type $\cT$, and for every $\nT$,
we know whether we take almost all or almost none vertices
of $C_\nT$ as described in the matrix $\mtx$ in $\shape$. This allows us to count the fair cost contributions of the vertices in the modulator.
\lv{

}%
Note that the variables $X_{\cT}^{\mathbf{S}, \sP}$ assign the cliques of type $\cT$ to solution patterns from $\spatterns$. For a vertex $w$ of a clique $C$ of type $\cT$ let $\nT(w) \in \ntypes$ be such that $w \in C_{\nT(w)}$. 
We call a clique vertex $w$ \emph{bounded} if $(\cT,\nT(w))$ is bounded. We call a vertex $w$ \emph{thin} (resp.\ \emph{thick}) if its clique of clique type $\cT$ is assigned to a solution pattern $\sP$ such that $(\nT(w),\cT,\sP)$ is thin (resp.\ thick). 
Let us fix a vertex $v$ from the modulator $\MSs$. 
We first calculate the number of neighbors of $v$ whose type 
is bounded.
\[
  \mathsf{bounded}(v) = \sum_{\cT \in \ctypes} \sum_{ \sP \in \spatterns } \sum_{\substack{\nT \in \ntypes, \\ \nT[v]=1 , \\ \cT[\nT] \leq \alpha }}  \sum_{\substack{ \mathbf{S} \subseteq \spatterns, \\ \sP \in \mathbf{S}}}  \big( X^{\mathbf{S}, \sP}_{\cT}\cdot\sP[\nT] \big).
\]

Now, we calculate the number of thin neighbors of $v$:
\[
  \mathsf{thin}(v) = \sum_{\cT \in \ctypes} \sum_{ \sP \in \spatterns } \sum_{\substack{\nT \in \ntypes, \\ \nT[v]=1, \\ \cT[\nT]=\alpha+1, \\ \sP[\nT] < \alpha / 2} }  \sum_{\substack{ \mathbf{S} \subseteq \spatterns, \\ \sP \in \mathbf{S}}}  \big( X^{\mathbf{S}, \sP}_{\cT}\cdot\sP[\nT] \big).
\]

To calculate the number of thick neighbors of $v$, we first sum the sizes of all $C_\nT$ where the shape is thick, and next we subtract the number of vertices excluded from $X$ in thick parts:
\[
  \mathsf{thick\_all}(v) = \sum_{\cT \in \ctypes} \sum_{C \in \Cc_\cT } \sum_{\substack{\nT \in \ntypes, \\ \nT[v]=1, \\ (\nT, \cT) \text{ is thick}} }  |C_\nT|.
\]

Now, we can calculate the contribution of thick solution patterns as follows:
\[
  \mathsf{thick}(v) = \mathsf{thick\_all}(v) - \sum_{\cT \in \ctypes} \sum_{ \sP \in \spatterns } \sum_{\substack{\nT \in \ntypes, \\ \nT[v] = 1, \\ \cT[\nT]=\alpha+1, \\ \sP[\nT] > \alpha / 2} }  \sum_{\substack{ \mathbf{S} \subseteq \spatterns, \\ \sP \in \mathbf{S}}} \big( X^{\mathbf{S}, \sP}_{\cT}\cdot(\alpha - \sP[\nT]) \big).
\]

\noindent
From that, we can calculate $\fc{v}$ for any $v \in \MSs$ as:
\[
  \text{FC}(v) =\mathsf{bounded}(v) + \mathsf{thick}(v)+\mathsf{thin}(v) + |N(v)\cap \nT^*|.
\]
Hence, we arrive at the final set of constraints:
\begin{align}
  \text{FC}(v) \leq k && \forall_{v \in \MSs}. \label{il_modulator}
\end{align}
To summarize, the whole ILP consists of Constraints 
(\ref{il_faircost}),
(\ref{ilp_compile1}),
(\ref{ilp_compile2}), 
and (\ref{il_modulator}).

We now analyze the time complexity. 
The only variables in the ILP are $X_{\cT}^{\mathbf{S}, \sP}$. 
Observe that $|\ctypes|$, $|\spatterns|$, and the number of subsets $\mathbf{S}\subseteq \spatterns$ are bounded by functions of $\alpha$, $\gamma$, $|D_G|$, and $|\phi|$.
Hence, we have a bounded number of variables, and we can solve the ILP in the stated running time by applying the algorithm of \cref{thm:ILPinBoundedDimension}.
Now, we prove correctness of the proposed algorithm. Let $I_{G, \phi}^\shape$ be the ILP instance for a graph $G$, formula $\phi$, and shape $\shape$. We need to prove that $I_{G, \phi}^\shape$ is a YES-instance if and only if there exists $X \subseteq V(G)$ such that $\fc{X} \leq k$, $G \models \phi(X)$, $X$ is $\alpha$-compliant, and $X$ agrees with $\shape$.

We start with the ``only if'' direction. 
We construct a set $X$ as follows. For each $\cT$ and each $\mathbf{S}$, we assign exactly $X^{\mathbf{S}, \sP}_{\cT}$ cliques in $\Cc_\cT^\mathbf{S}$ the solution pattern $\sP$. This defines a set $X$. Now, $\fc{X} \leq k$ follows from the fact that $I_{G, \phi}^\shape$ is a YES-instance and from the construction of $\mathbf{S}$, which ensures that the fair cost of vertices both in $D_G$ and in $\Cc$ is at most $k$. 
Observe that agreement of $X$ with $\shape$ is ensured by \cref{il_faircost}, \cref{ilp_compile1}, and \cref{ilp_compile2}.
We know that $G \models \phi(X)$ because feasibility of the ILP implies $\models \phi(\shape)$, and from \cref{lem:shape_equivalence_one} we deduce that $G \models \phi(X)$. The fact that $X$ is $\alpha$-compliant follows directly from the construction.

For the ``if'' direction, let $X \subseteq V(G)$ be an $\alpha$-compliant set that satisfies all assumptions and let $\shape$ be a shape agreeing with $X$. Because the sets $\Cc_\cT^\mathbf{S}$ partition cliques, we can set each variable $X^{\mathbf{S}, \sP}_{\cT}$ to be the number of cliques in $\Cc_\cT^\mathbf{S}$ on which $X$ matches $\sP$. Verifying that all constraints are satisfied is analogous to the previous direction. %
\end{proof}

We prove the main theorem of this section (\cref{thm:main-fpt}) by an iterative application of \cref{lem:coherent_shapeILP}.
We can compare the following approach to the proof of the model-checking corollary (\cref{cor:mc}\sv{ in Appendix \ref{sec:mc-app}}).

\begin{proof}[Proof of \cref{thm:main-fpt}]
    Let $f$ be the function from \cref{lem:shape_equivalence_one}.
    We iterate over each $\shape\in\shapes^{(\alpha, \gamma, |\MSs|)}$.
    We model-check the formula on $\shape$, which leads to model-checking of the associated graph and the set (consult \cref{def:associatedGraph}), whose sizes are bounded by the parameters.
    If $\models \phi(\shape)$ then we use \cref{lem:coherent_shapeILP} to decide, in \FPT time, whether there exists an $\alpha$-compliant set $X$ agreeing with $\shape$ and of fair cost at most $k$.
    This proves that we are able to solve the \textsc{\MSOo-FairVE} problem in \FPT time.
\end{proof}

\toappendix{
  \section{Problems Covered by {\protect\Cref{thm:main-fpt}} (Proof of {\protect\Cref{cor:problmesFPT}})}\label{sec:problems}

In this short section, we provide a list of particular problems that are covered by the rather technical statement of \Cref{thm:main-fpt}.
\CorSlvdProblems*

Alongside the definitions, we also give justifications of why their shapes are coherent.
Therefore, we prove \cref{cor:problmesFPT}.

\prob{\sc Fair Vertex Cover}
{An undirected graph $G$ and an integer $k$.}
{Is there a set $X \subseteq V(G)$ of fair cost at most $k$ such that every edge has at least one endpoint in $X$?}

\prob{\sc Fair Feedback Vertex Set}
{An undirected graph $G$ and an integer $k$.}
{Is there a set $X \subseteq V(G)$ of fair cost at most $k$ such that $G-X$ is a forest?}

\prob{\sc Fair Odd Cycle Transversal}
{An undirected graph $G$ and integer $k$.}
{Is there a set $X \subseteq V(G)$ of fair cost at most $k$ such that $G - X$ is a bipartite graph?}

The crucial observation is that solving all the aforementioned problems on cliques is relatively straightforward.
Any valid solution forces the set $X$ to contain all but at most a constant number of vertices in each clique (or in each $C_\nT$).
We formulate this idea as the following observation.
\begin{observation}[Solution structure on a clique]\label{obs:solutionStructure}
    Let $G$ be any graph. For any problem $\pi$ defined above, there is a constant $c_\pi$ such that the following holds.
    Whenever we have $X \subseteq V(G)$ that satisfies $\pi$ (except for the fair cost condition), then for any $C \in \Cc$, it holds that $|X \cap C| \geq |C|-c_\pi$. In particular, for every $\nT$, it holds that $|X \cap C_\nT| \geq |C_\nT|-c_\pi$.
\end{observation}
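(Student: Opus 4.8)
The plan is to treat all three problems uniformly by a single structural observation and then read off the constant $c_\pi$ in each case. The key point is that for a clique $C\in\Cc$, the set of \emph{unselected} vertices $V(C)\setminus X$ induces a complete graph (being a vertex subset of a clique), so whatever property the deletion problem $\pi$ imposes on $G-X$ must, by heredity, also hold for this complete graph. For Fair Vertex Cover the property is that $G-X$ is edgeless; for Fair Feedback Vertex Set it is that $G-X$ is acyclic; for Fair Odd Cycle Transversal it is that $G-X$ is bipartite. Each of these three properties is closed under taking induced subgraphs, which is the only fact I would invoke to pass from $G-X$ to $(G-X)[V(C)\setminus X]=K_{|V(C)\setminus X|}$.

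First I would fix $X$ satisfying $\pi$ (ignoring the fair cost) and set $m\df |V(C)\setminus X|$, so that $(G-X)[V(C)\setminus X]$ is the complete graph $K_m$. Then I would bound $m$ separately in each case. For Fair Vertex Cover, $G-X$ has no edges, hence $K_m$ has no edges, forcing $m\le 1$; take $c_\pi=1$. For Fair Feedback Vertex Set, $G-X$ is a forest and therefore acyclic, hence $K_m$ is acyclic; since $K_3$ already contains a triangle, $m\le 2$; take $c_\pi=2$. For Fair Odd Cycle Transversal, $G-X$ is bipartite, hence $K_m$ is bipartite; since $K_3$ contains an odd cycle, $m\le 2$; take $c_\pi=2$. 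In all cases $|V(C)\setminus X|=m\le c_\pi$, which rearranges to $|X\cap C|\ge |C|-c_\pi$, giving the first claim.

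For the \enquote{in particular} statement, I would simply note that $C_\nT\subseteq V(C)$, so $C_\nT\setminus X\subseteq V(C)\setminus X$ and hence $|C_\nT\setminus X|\le |V(C)\setminus X|\le c_\pi$, which rearranges to $|X\cap C_\nT|\ge |C_\nT|-c_\pi$. I do not expect any genuine obstacle here: the statement is essentially immediate once one recognizes that the unselected part of a clique is itself a clique. If anything deserves care, it is only confirming that each defining property is hereditary (so that restricting from $G-X$ to the induced sub-clique is legitimate) and that the resulting constant $c_\pi$ depends solely on $\pi$ and not on $G$, $C$, or $\nT$; both are clear from the case analysis above, and it is exactly this clique-local uniform bound that will later certify coherence of the associated shapes.
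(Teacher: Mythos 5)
Your proposal is correct and follows exactly the reasoning the paper intends (the paper leaves the argument implicit, merely noting that $c_\pi\le 2$): the unselected part of a clique is itself a clique, each target property (edgeless, acyclic, bipartite) is hereditary, and so at most $1$, $2$, $2$ vertices respectively can be left outside $X$. Nothing further is needed.
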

\Cref{obs:solutionStructure} immediately leads to coherent shapes, as all non-zero entries of the shape matrix are thick or bounded.
It is a routine and standard check that the above problems can be expressed by an \MSOo formula of constant size.
Therefore, we conclude by \Cref{thm:main-fpt}.
Observe that $c_\pi\le2$ for all the problems mentioned above.
Hence, we prove \cref{cor:problmesFPT} for them.

We now move to problems where the coherence originates in the matrix entries being thin.
We can formulate an analogous observation where we replace the last condition with $|X \cap C_\nT| \leq c_\pi$.

\prob{\sc Fair [$\sigma,\rho$]-Domination}
{An undirected graph $G=(V,E)$, finite or cofinite $\sigma, \rho\subseteq \mathbb{N}$, and an integer $k$.}
{Is there a set of vertices $X\subseteq V(G)$ of fair cost at most $k$ such that for all $v\in X$, $|N(v)\cap X|\in\sigma$ and for all $v\in V(G)\setminus X$, $|N(v)\cap X|\in\rho$?}
Observe that \textsc{Fair Dominating Set} is a special case of $[\sigma,\rho]$-Domination where $\sigma=\mathbb{N}$ and $\rho=\mathbb{N}\setminus\{0\}$.
Again, it is a routine check to describe $[\sigma,\rho]$-Domination by an \MSOo formula, depending on $\sigma,\rho$.

We assume that $\sigma$ is finite. 
We define $\max_\sigma\df \max \sigma+1$.
Observe that each clique contains at most $\max_\sigma$ vertices of the solution.
Therefore, we conclude that any such shape is coherent, with all non-zero entries of the associated matrix thin or bounded.

We assume that both $\sigma$ and $\rho$ are cofinite. 
We define $\max_{\sigma,\rho}\df \max_{\substack{ x\in\mathbb{N}, x\not\in \rho, \\ x \not\in \sigma}} (x+1)$.
We argue that it is never profitable to put into $X$ more than $\max_{\sigma,\rho}$ vertices of any $C_\nT$.
If there are more than $\max_{\sigma,\rho}$ vertices in some $C_\nT\cap X$, we make $|C_\nT\cap X|=\max_{\sigma,\rho}$.
Indeed, this operation violates neither the $\rho$ nor the $\sigma$ conditions. For any vertex $v$ in a clique $C$, either the value $|N(v) \cap X|$ remains unchanged or it still belongs to both $\rho$ and $\sigma$. The same holds for any vertex $v$ from the modulator. %
Moreover, this reduction does not increase the fair cost, as taking subsets can only decrease it.
Hence, we conclude that such shapes are coherent, with all non-zero entries of the associated matrix thin or bounded.

By this analysis we conclude the proof of \cref{cor:problmesFPT}.

}

\toappendix{
\section{Hardness}\label{sec:hardness}

In this section, we present the reduction proving \W{1}-hardness of \FairVD, stated in the following theorem.
\WhThm*

We use the \textsc{Unary $\ell$-Bin Packing} problem as the starting point of our hardness reduction.
The \textsc{Unary $\ell$-Bin Packing} problem is \W{1}-hard parameterized by $\ell$, the number of bins~\cite{JansenKMS13}.
There, the item sizes are encoded in unary and the task is to assign $n$ items to $\ell$ bins such that the sum of sizes of items assigned to any bin does not exceed its capacity $B$.
Formally, \textsc{Unary $\ell$-Bin Packing} is defined as follows.

\prob{\textsc{Unary $\ell$-Bin Packing}\hspace{20em} {\em Parameter:} $\ell$}
{Positive integers $\ell, B$ and a multiset $\mathcal{S}$ of items of non-zero sizes $s_1, \ldots, s_n$ encoded in unary.}
{Is there a packing of all items into at most $\ell$ bins of size $B$?

More formally, is there a function $f: [n] \rightarrow [\ell]$ such that
\[
  \underset{k \in [\ell]}{\forall}  \underset{i \in f^{-1}(k)}{\sum} s_i \leq B \, ?
\]}

Now we introduce the first problem, which we reduce from \dbinpacking.

\prob{\textsc{Unary $d$-Tuple}\hspace{18em} {\em Parameter:} $d$}
{Positive integers $b, d$ and a multiset $\mathcal{A}$ of $n$ integer tuples $(a^{(i)}_1, \ldots, a^{(i)}_d)$, $i \in \{ 1, \ldots ,n \}$, where all entries are encoded in unary.}
{Is there a function $f : [n] \rightarrow [d]$ such that
\[
  \underset{k \in [d]}{\forall}  \sum_{i \in f^{-1}(k)} a^{(i)}_k \leq b \, ?
\]}

\noindent
The main result of this section is based on the two lemmas that follow next. In the first lemma (\cref{lem:tuple-W1}) we show a reduction from the \dbinpacking problem to the \dtuple problem, which implies \W{1}-hardness of the latter.

\begin{lemma}\label{lem:tuple-W1}
  \dtuple problem is \W{1}-hard parameterized by $d$.
\end{lemma}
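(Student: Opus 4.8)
The plan is to give a straightforward parameter-preserving reduction from \dbinpacking, which is \W{1}-hard in the parameter $\ell$. The key observation is that the \dtuple problem is a direct generalization of bin packing: in bin packing an item has a single size, whereas in the tuple problem an item can contribute a different amount depending on which bin (coordinate) it is placed into. Recovering bin packing is therefore simply the special case in which all coordinates of every tuple coincide.

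Concretely, given a bin-packing instance with $\ell$ bins, capacity $B$, and sizes $s_1, \dots, s_n$, I would build a tuple instance by setting $d := \ell$, $b := B$, and associating with each item $i$ the $d$-tuple whose every coordinate equals $s_i$, i.e. $a^{(i)}_k := s_i$ for all $k \in [d]$. Since each $s_i$ is non-zero, every constructed tuple is a non-zero integer tuple, exactly as the definition requires. The construction runs in polynomial time (the unary encoding grows only by the factor $d = \ell$), and it preserves the parameter exactly, so it is a valid \FPT reduction.

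For correctness I would argue both directions at once, observing that the reduction leaves the space of assignments untouched: any $f : [n] \to [\ell] = [d]$ is simultaneously an assignment for both instances, and for every bin $k$ the two load expressions coincide,
\[
\sum_{i \in f^{-1}(k)} a^{(i)}_k \;=\; \sum_{i \in f^{-1}(k)} s_i,
\]
because $a^{(i)}_k = s_i$ for all $i$. Hence all tuple-loads are at most $b = B$ if and only if all bin-loads are at most $B$, so $f$ witnesses a yes-instance of the \dtuple problem precisely when it witnesses a yes-instance of bin packing.

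There is no genuinely hard step here; the reduction is essentially the identity on assignments with replicated sizes. The only points requiring (routine) care are verifying that the non-zero-tuple requirement is met, that the unary blow-up by a factor of $d$ keeps the reduction polynomial, and that the new parameter $d = \ell$ is bounded by a function of the old parameter $\ell$ — all of which hold trivially. The substance of the overall hardness result will instead reside in the subsequent lemma that encodes the \dtuple problem into the fair graph problem, where the coordinate structure of the tuples is genuinely exploited.
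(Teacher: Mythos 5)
Your reduction is exactly the one used in the paper: set $d=\ell$, $b=B$, and replicate each size $s_i$ across all $d$ coordinates, then observe that the load conditions coincide for every assignment $f$. The argument is correct and essentially identical to the paper's proof, with your added remarks on unary blow-up and the non-zero-tuple requirement being routine but welcome.
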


\begin{proof}
  To show \W{1}-hardness of \dtuple we reduce from \dbinpacking.
  Let us take any instance $(\mathcal{S}, \ell, B)$ of \dbinpacking.
  We construct an instance $(\mathcal{A}, d, b)$ of \dtuple by setting $b = B$, $d = \ell$, and
  \[
    \forall i \in [n]\  \forall k \in [d] : \ a^{(i)}_k = s_i \, .
  \]

  Assume that $(\mathcal{A}, d, b)$ is a YES-instance, and let $f$ be a witnessing function.
  Since $d = \ell$, $b = B$, and $a^{(i)}_k = s_i$ for all $i$ and for all $k\in[d]$, we can rewrite the condition
  \[
    \forall k \in [d] : \sum_{i \in f^{-1}(k)} a^{(i)}_k \leq b
  \]
  as
  \[
    \forall k \in [\ell] : \sum_{i \in f^{-1}(k)} s_i \leq B \, .
  \]
  Hence, it is easy to observe that every witness for a YES-instance of the \dtuple problem is also a witness for a YES-instance of the \dbinpacking problem, and vice versa.
\end{proof}

Now, we show the main reduction, reducing from \dtuple and proving \cref{thm:hardness}; see \Cref{fig:reduction} for a visual representation of the reduction.

\begin{figure}
\centering
\begin{subfigure}{0.7\textwidth}
    \includegraphics[width=\textwidth]{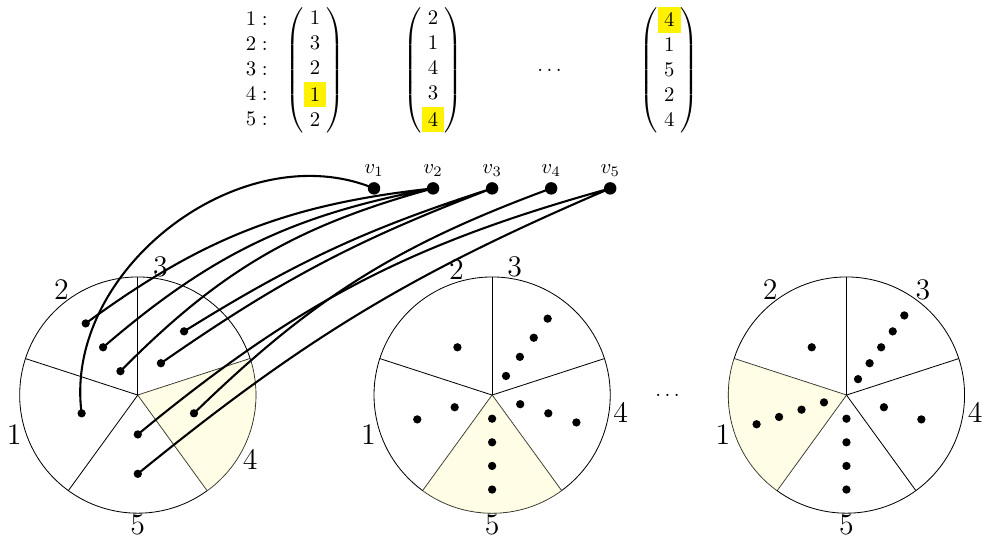}
\end{subfigure}

\caption{At the top, there is an instance of \dtuple (with the solution highlighted and $d=5$). Below, there is an instance of \FairVD, with $\cvd(G)=5$, but without private cliques depicted. Hence, there are as many cliques as there are tuples in the \dtuple instance. Each clique corresponds to one tuple, and the number of vertices connected to the $i$-th vertex of the modulator equals the $i$-th value in the tuple. For better readability, we only present the edges between the first clique and the modulator in the picture.}
\label{fig:reduction}
\end{figure}

\begin{proof}[Proof of \cref{thm:hardness}]
  We reduce from \dtuple, which was shown \W{1}-hard in \cref{lem:tuple-W1}.

  We describe the construction from \dtuple to \FairVD.
  Let $(\mathcal{A}, d, b)$ be an instance of \dtuple.
  First, we remove all tuples that have at least one zero entry.
  This is safe, since such a tuple can always be assigned to a position where it contributes $0$ to the corresponding sum constraint.

  We construct a graph $G$ as follows.
  First, we create vertices $v_1, v_2, \dots, v_d$.
  For the $i$-th tuple $(a^{(i)}_1, \ldots, a^{(i)}_d) \in \mathcal{A}$, we construct a clique $C^i$ of size $\sum_{k=1}^d a_k^{(i)}$, and we connect exactly $a_k^{(i)}$ vertices of this clique to vertex $v_k$ for each $k \in [d]$.
  This is done in such a way that every vertex in the clique $C^i$ is adjacent to exactly one vertex of the modulator.
  We denote the set of those $a_k^{(i)}$ vertices from clique $C^i$ by $C^i_{k}$.

  Additionally, we create a collection of \emph{private cliques} that plays a role of preserving $\mathsf{modul}$ (defined in the next paragraph) after deletion.
  For each $k \in [d]$, we additionally create three cliques $P^k_1, P^k_2, P^k_3$, each of size $b+1$, and we connect every vertex of each $P^k_t$ to $v_k$ (and to no other $v_{k'}$).
  In case there is $k\in[d]$ for which $\sum_{i\in[n]}a_k^{(i)}\le b$, we answer YES as this is a trivial instance.
  Hence, we know that $b$ is comparable with item sizes encoded unary on the input.
  Again, every vertex outside $\{v_1,\dots,v_d\}$ is adjacent to exactly one vertex of the modulator.
  Observe that $\{v_1, v_2, \dots, v_d\}$ is still a modulator of size $d$, since removing it leaves us with a set of vertex-disjoint cliques (the cliques $C^i$ and the cliques $P^k_t$).

  Now we propose the formula for the instance of \FairVD.
  First, we define an \FO formula $\mathsf{modul}(v)$, which checks whether the vertex $v$ belongs to the modulator.
  Observe that, due to our construction with private cliques, for any deletion set $X$ with $\fc(X) \le b$, each $v_k$ vertex in modulator keeps at least one neighbor in each of $P^k_1, P^k_2, P^k_3$, and thus $\mathsf{modul}(v_k)$ remains true in $G - X$.
  In contrast, no vertex outside the modulator satisfies this property.
  The above property can be encoded by the following \FO formula:
  \[
    \mathsf{modul}(v) \df \exists_{a, b, c \in N(v)} \ a \neq b \wedge a \neq c \wedge b \neq c \wedge \neg e(a, b) \wedge \neg e(b, c) \wedge \neg e(c, a).
  \]

  We then present the final formula, which is satisfied if and only if, in every clique $C^i$, we remove at least one set $C^i_k$ for some $k \in [d]$, while leaving the modulator unchanged.
  Observe that the above is true if and only if there are \textbf{no} $d$ vertices outside the modulator that lie in one clique and each of them is a neighbor of a different vertex of the modulator.
  We encode this by the following \FO sentence:
  \[
  \begin{gathered}
      \phi \df \exists_{m_1, \dots, m_d}
      \bigwedge_{ \substack{ i,j \in[d] \\ i \neq j}} m_i \neq m_j
      \ \wedge\ \bigwedge_{i \in [d]} \mathsf{modul}(m_i) \ \wedge \\
      \neg \Bigg( \exists_{v_1, \dots, v_d}
      \bigwedge_{ \substack{ i,j \in[d] \\ i \neq j}} v_i \neq v_j
      \ \wedge\ \bigwedge_{i \in [d]} \neg \mathsf{modul}(v_i)
      \ \wedge\ \bigwedge_{\substack{ i,j \in[d] \\ i \neq j}} e(v_i, v_j)
      \ \wedge\ \bigwedge_{\substack{ i,j \in[d] \\ i \neq j}} \Big(\exists u : \mathsf{modul}(u) \wedge e(u, v_i) \wedge \neg e(u, v_j)\Big) \Bigg).
  \end{gathered}
  \]
  Finally, we set the value of the fair-cost threshold in \FairVD to be $k \df b$.

  We now analyze the correctness of the reduction.
  Since entries of \dtuple are encoded in unary, the number of vertices in $\bigcup_i C^i$ is linear in the sum of all tuple entries, and the private cliques contribute only $3d(b+1)$ additional vertices (which is comparable to the unary encoded items due to preprocessing of the trivial instances); hence $|G|$ is polynomial in the input size.
  The size of $\phi$ depends only on $|\MSs|$.

Finally, we show the completeness and soundness of the reduction.
Assume that \dtuple is a YES-instance, and let $f$ be a witness of that. 
Then the set $X$ such that $\forall_{i \in [n]}  X \cap C^i = C^i_{f(i)}$ is a witness for a YES-instance of \FairVD.

Now, assume that \FairVD is a YES-instance and let $X \subseteq V(G)$ be a witness of that.
Note that $X \cap \cvd(G) = \emptyset$.
Moreover, for every $i \in [n]$ there exists some $k \in [d]$ such that $C^i_k \subseteq X$, since $X$ satisfies $\phi$.
Let $X' \subseteq X$ be such that for every $i \in [n]$ we have $X' \cap C^i = C^i_k$ for some $k \in [d]$ with $C^i_k \subseteq X$.
Observe that $X'$ still satisfies $\phi$ and $\fc{X'} \leq \fc{X} \leq k$.
Let $f$ be such that for every $i \in [n]$, $f(i)$ is equal to $k \in [d]$ if $X' \cap C^i = C^i_k$.
Then $f$ is a witness that the instance of \dtuple is a YES-instance.
\end{proof}
}

\lv{
\section{Conclusions}\label{sec:conc}
In this paper, we resolved an open question posed in \cite{KMT19}.
We showed that, despite the general problem being \W{1}-hard, we are able to solve many graph vertex problems in \FPT time.
It would be interesting to see whether we can extend (some of) the \FPT results to denser graph parameters.
In particular, \cite{KMT19} shows that the Fair Vertex Cover problem is solvable in \FPT time parameterized by the modular-width.
We would like to emphasize their question of whether $\MSOo$-FairVE is \FPT parameterized by the modular-width and the size of the formula.

While \cref{thm:main-fpt} covers a large number of natural graph vertex problems, it would be interesting to know whether all cases of Fair $[\sigma,\rho]$-Domination problems are \FPT parameterized by the cluster vertex deletion; namely, the unresolved case where $\sigma$ is cofinite and $\rho$ is finite.

More generally, our intuition is that the tractable cases of fair vertex problems parameterized by the cluster vertex deletion are close to \cref{thm:main-fpt}.
We leave it as an open and challenging question where the dichotomy is drawn.

Finally, we propose the previously overlooked question of whether \MSOo-FairVE problems are in \XP parameterized by the clique-width and the size of the formula.
For the \MSOt-FairVE problem, \cite{Kolman09onfair} showed an \XP algorithm parameterized by the treewidth and the size of the formula, as observed by \cite{MT20}.
This result was later implied by a more general result of \cite{KKMT}.
However, to the best of our knowledge, an \XP algorithm parameterized by clique-width is missing.
}

\bibliography{main}

\sv{
  \newpage
\appendix
\section{Main Tools and Model-checking Machinery---Proof of \cref{lem:shape_equivalence_one}} \label{sec:mc-app}
\appendixText
}

\end{document}